\DeclarePairedDelimiter\floor{\lfloor}{\rfloor}
\DeclarePairedDelimiter\ceil{\lceil}{\rceil}
\newtheorem{problem}[theorem]{Problem}
\newcommand{\argmin}{\operatorname{argmin}}
\def\Cdo {CDO}
\def\CDO {CDO}
\def\MCDOCH {MCDOCH}
\def\AMCDOCH {AMCDOCH}
\def\MCDO {MCDO}
\def\NNS {NNS}
\def\RNNS {RNNS}
\def\MPMP {$(min,+)$-matrix product}
\def\ACDO {$\text{A\Cdo{}}$}
\def\DRMQ {2DRMQ}
\def\\max(S) {$\max(S)$}
\newcommand{\ThreeSUM}{\textsf{3SUM}}
\newcommand{\APSP}{\textsf{APSP}}
\newcommand{\SetDisjointness}{\textsf{SetDisjointness}}
\newcommand{\ignore}[1]{}
\title{Color Distance Oracles and Snippets: Separation Between Exact and Approximate Solutions}
\titlerunning{Color Distance Oracles and Snippets: Separation Between Exact and Approximate Solutions}
\author{Noam Horowicz}{Bar Ilan University, Israel}{noamhorowicz@gmail.com}{https://orcid.org/0009-0005-6909-234X}{}
\author{Tsvi Kopelowitz}{Bar Ilan University, Israel}{kopelot.biu@gmail.com}{https://orcid.org/0000-0002-3525-8314}{}
\authorrunning{N. Horowicz and T. Kopelowitz}
\keywords{data structures, fast matrix multiplication, fine-grained complexity, pattern matching, distance oracles}
\begin{document}

    \maketitle
\begin{abstract}
    In the \emph{snippets} problem, the goal is to preprocess a text $T$ so that given two pattern queries, $P_1$ and $P_2$, one can quickly locate the occurrences of the two patterns in $T$
that are closest to each other, or report the distance between these occurrences.
Kopelowitz and Krauthgamer [CPM2016] showed upper bound tradeoffs and conditional lower bounds tradeoffs for the snippets problem, by utilizing connections between the snippets problem and the problem of constructing a color distance oracle (CDO), which is a data structure that preprocess a set of points with associated colors so that given two colors $c$ and $c'$ one can quickly find the (distance between the) closest pair of points where one has color $c$ and the other has color $c'$.
However, the existing upper bound and lower bound curves are not tight.

Inspired by recent advances by Kopelowitz and Vassilevska-Williams [ICALP2020] regarding tradeoff curves for Set-disjointness data structures, in this paper we introduce new conditionally optimal algorithms for a $(1+\varepsilon)$ approximation version of the snippets problem and a $(1+\varepsilon)$ approximation version of the CDO problem, by applying fast matrix multiplication.
For example, for CDO on $n$ points in an array, if the preprocessing time is $\tilde{O}(n^a)$ and the query time is $\tilde{O}(n^b)$ then, assuming that $\omega=2$ (where $\omega$ is the  exponent of $n$ in the runtime of the fastest matrix multiplication algorithm on two squared matrices of size $n\times n$), we show that approximate CDO can be solved with the following tradeoff \[\begin{dcases}
        a +  2b = 2 &\text{if } 0 \leq  b \leq \frac1 3 \\
        2a + b = 3  &\text{if } \frac13\leq b \leq 1.
    \end{dcases}\]

Moreover, we prove that for exact CDO on points in an array,  the algorithm of Kopelowitz and Krauthgamer [CPM2016], which obtains a tradeoff of $a+b =2$, is essentially optimal assuming that the \emph{strong all-pairs shortest paths} hypothesis holds for randomized algorithms.
Thus, we demonstrate that the exact version of CDO is strictly harder than the approximate version.
Moreover, this separation carries over to the snippets problem.

\end{abstract}
\newpage

\section{Introduction}\label{sec:intro}
In the \emph{snippets} problem, introduced by Kopelowitz and Krauthgamer~\cite{KK16}, the goal is to preprocess a text $T$ so that given two pattern queries, $P_1$ and $P_2$, one can quickly locate the locations of the two patterns in $T$
that are closest to each other, or report the distance between these locations.
The snippets problem is motivated by many common text indexing applications, such as  searching a corpus
of documents for two query keywords.
Often, the relevance of a document to the query is  measured by
the proximity of the two keywords within the document.
The term \emph{snippet} is derived from  search engines often  providing  each result with a snippet of text from the document  with  the two keywords  close to each other.

Kopelowitz and Krauthgamer~\cite{KK16} designed a tradeoff algorithm for the snippets problem based on a connection to a problem on colored points in metric spaces, which we define next.

\paragraph*{Colored points and color distances. }
Let $\mathbf{M}$ be a metric space with distance function $d(\cdot,\cdot$).
    Let $S \subseteq \mathbf{M}$ be a set of points where $\abs{S} = n$.
    Let $\mathcal{C}$ be a set of \emph{colors}.
    For sake of convenience we assume that $\mathcal C = [|\mathcal C|]$.
    Each point $p$ $\in$ $S$ has an associated color $c_p \in \mathcal{C}$.
    For a color $c\in \mathcal C$, let $P_S(c)=\{p\in S | c_p = c\}$.
    When $S$ is clear from context we abuse notation and denote $P(c) = P_S(c)$.

For sets of points $A,B\subset \mathbf M$ and a point $p\in M$, let $d(p,A) = \min_{p'\in A}\{d(p,p')\}$ and let $d(A,B) = \min_{p'\in A}\{d(p',B)\} = \min_{p'\in A, \hat p\in B}\{d(p',\hat p)\}$.
We extend the definition of $d$
to inputs of colors as follows.
For colors $c,c'\in \mathcal C$ and point $p\in \mathbf{M}$, let $d(p,c) = d(p,P(c))$ and let $\delta_{c,c'} = d(c,c') = d(P(c),P(c'))$.
We say that $\delta_{c,c'}$ is the \emph{color distance} between $c$ and $c'$.
A natural problem is to construct a \emph{color distance oracle} (\CDO), which is defined as follows.

    \begin{problem}
    [The Color Distance Oracle (\Cdo{}) problem~\cite{KK16}]\label{prb:CDO} Given a set $S\subseteq \mathbf M$ of $n$ colored points with colors from $\mathcal C$, preprocess $S$ to support the following query: given $c, c'$ $\in$ $\mathcal{C}$, return $\delta_{c,c'}$.
    \end{problem}

    In this paper we consider  metrics defined by locations in an array of size $s$, and so the metric is the set\footnote{Throughout this paper for a positive integer $k$ we denote $[k] = \{1,2,,\ldots,k\}$.} $[s]$, and the distance function of two points $p$ and $q$ is $d(p,q) = \abs{p-q}$.

\paragraph*{Multi-colored points and color hierarchies.}
A natural generalization of a colored set is to allow each  $p\in S$ to be associated with a nonempty set of colors $c(p) \subseteq \mathcal C$, and in such a case $S$ is said to be \emph{multi-colored}.
This version of the \CDO{} problem is called the \emph{multi-color distance oracle} (\MCDO{}) problem.
In general, representing $c(p)$ costs $\Theta (|c(p)|)$ space by explicitly listing all colors in $c(p)$.
Thus, the input size is $n=\sum_{p\in S}|c(p)|$.

Nevertheless, there are interesting cases in which the lists of colors for each point are not required to be given explicitly.
One such example, which we focus on, is when for every two colors $c, c'\in \mathcal C$, either one of the sets $P(c)$ and $P(c')$ contains the other, or the two sets are disjoint.
In such a case, we say that $\mathcal C$ has a \emph{color hierarchy} with respect to $S$
(a formal terminology is that $\{P(c)\}_{c\in \mathcal C}$ is a laminar family), represented by a rooted forest (i.e., each tree has a root) $T_S$ of size $O(|\mathcal C|)$.
Each color $c$ is associated with a vertex $u_c$ in $T_S$, such that the descendants of $u_c$ are exactly all the vertices $u_{c'}$ whose color $c'$ satisfies $P(c')\subseteq P(c)$.
We convert the forest $T_S$ to a rooted tree by adding a dummy root vertex and making the dummy root the parent of all of the roots of the trees in the forest.

    With the aid of $T_S$, a multi-colored set $S$ that admits a color hierarchy can be represented using only $O(n+|\mathcal C|)$ machine words, because it suffices to store $T_S$ and associate with each point $p$ just one color $c$
(the color with the lowest corresponding vertex in $T_S$);
the other colors of $p$ are implicit from $T_S$
(the colors on the path from $u_c$ to the root of $T_S$).
Notice that $|\mathcal C| > n$ implies that there are at least two  colors with the same point set.
Thus, we make the simplifying assumption that all colors have different point sets\footnote{If we happen to have several colors with the same point set, we ignore all of them but one during preprocessing, and use a straightforward mapping from the original coloring to the reduced coloring during query time.}, and so the input size is $O(n)$.

Thus, a natural extension of Problem~\ref{prb:CDO} is the following.

\begin{problem}[The Multi-Color Distance Oracle with a Color Hierarchy (\MCDOCH{}) problem~\cite{KK16}]
Given a set $S\subseteq \mathbf M$ of $n$ multi-colored points with colors from $\mathcal C$ which form a color hierarchy with respect to $S$, preprocess $S$ to support the following query: given  $c, c'$ $\in$ $\mathcal{C}$, return $\delta_{c,c'}$.
\end{problem}

Let $a$ and $b$ be real numbers such that the preprocessing and query times of a \MCDOCH{} algorithm are $\Tilde{O}(n^a)$\footnote{Throughout this paper we use the notation $\tilde O(\dot)$ to suppress sub-polynomial factors.} and $\Tilde{O}(n^b)$, respectively.
A tradeoff algorithm for the  \MCDOCH{} problem was presented in~\cite{KK16}, where given parameter $1\le \tau \le n$, the query time is $\Tilde{O}(n^b) = \tilde O(\tau)$ and the preprocessing time is $\tilde{O}(n^a) = \tilde O(n^2/\tau)  = \tilde O(n^{2-b})$.
When ignoring sub-polynomial factors, this upper bound tradeoff is given by $a+b = 2$ where $0\le b \le 1$ and $1\le a \le 2$.
Notice that the same tradeoff applies to the \CDO{} problem.

In addition to the tradeoff algorithms, ~\cite{KK16} proved a  \ThreeSUM{} based conditional lower bound (CLB) tradeoff\footnote{The CLB tradeoff  stated here is straightforward to derive from the statement of Theorem 4 in~\cite{KK16}.} of $a+2b \ge 2$
for the \CDO{} problem (and hence for the \MCDOCH{} problem), even when $\mathbf M$ is defined by locations in an array.
Their CLB  holds also for approximate versions of the \CDO{} problem (see Theorem 8 in~\cite{KK16}), where for a fixed $\alpha >1$, the answer to a color distance query between $c$ and $c'$ is required to be between $\delta_{c,c'}$ and $\alpha \cdot \delta_{c,c'}$.

\paragraph*{Solving the Snippets problem.}
By designing a reduction from the snippets problem to the \MCDOCH{} problem, \cite{KK16} were able to design an algorithm for the snippets problem with  essentially the same tradeoff:
for a chosen parameter $1\le \tau \le |T|$ their snippets algorithm  uses $\tilde O (|T|^2/\tau)$ preprocessing time and answers queries in $\tilde O(|P_1| + |P_2| + \tau)$ time.
The  main idea in the reduction is to construct the suffix tree of $T$, assign a color to each vertex in the suffix tree, and for each leaf $\ell$ in the suffix tree, the corresponding location in $T$ is assigned all of the colors on the path from $\ell$ to the suffix tree root.
It is straightforward to see that the colors define a hierarchy and that $T_S$ is exactly the suffix tree.

\paragraph*{A complexity gap.}
The upper bound curve of $a+b = 2$ and the CLB tradeoff of $a+2b\ge 2$ form a complexity gap for the snippets and (approximate) \CDO{} problems, which was left open by~\cite{KK16}.
Moreover,  based on the combinatorial Boolean Matrix Multiplication (BMM) hypothesis, \cite[Theorem 8]{KK16} implies that any algorithm that beats the $a+b = 2$ curve, even for approximate versions, must use non-combinatorial techniques, such as fast matrix multiplication (FMM) algorithms.

We remark that the CLB tradeoff given in~\cite{KK16} is based on  a reduction from a \SetDisjointness{} problem to the (approximate) \CDO{} problem, and at the time \cite{KK16} was published,  the \SetDisjointness{} problem was known to exhibit the same upper bound tradeoff and \ThreeSUM{} based CLB tradeoff~\cite{KPP16}.
However, recently Kopelowitz and Vassilevska-Williams~\cite{KVW20} closed this complexity gap by, among other ideas, using FMM techniques.
Thus, a natural question to ask is whether FMM can assist in obtaining improved algorithms for \CDO{} problems.

\paragraph*{Our results}
We  close the complexity gap for the $(1+\varepsilon)$ approximation versions of \CDO{} and  $\MCDOCH{}$, where $\mathbf M$ is defined by locations in an array.
Formally, the problems are defined as follows.

    \begin{problem}
    [The $(1+\varepsilon)$-Approximate Color Distance Oracle (\ACDO{}) problem on an array]
    Let  $\mathbf M$ be a metric defined by locations in an array of size $n$.
    Given a set $S \subset \mathbf M$ of $n$ colored points with colors from $\mathcal C$ and a fixed real $0\le \varepsilon \le 1$, preprocess $S$ to support the following query: given two colors $c, c'$ $\in$ $\mathcal{C}$, return  a value $\hat \delta$ such that  $\delta_{c,c'} \le \hat \delta \le (1+\varepsilon)\delta_{c,c'}$.
    \end{problem}

\begin{problem}[The $(1+\varepsilon)$-Approximate Multi-Color Distance Oracle  problem with a Color Hierarchy (\AMCDOCH{}) on an array]

    Let  $\mathbf M$ be a metric defined by locations in an array of size $\Theta(n)$.
    Given a set $S\subseteq \mathbf M$ of $n$ multi-colored points with colors from $\mathcal C$ which form a color hierarchy with respect to $S$, and a fixed real $0\le \varepsilon \le 1$, preprocess $S$ to support the following query: given two colors $c, c'$ $\in$ $\mathcal{C}$, return  a value $\hat \delta$ such that  $\delta_{c,c'} \le \hat \delta \le (1+\varepsilon)\delta_{c,c'}$.

\end{problem}

Our first main result is a new tradeoff algorithm for \ACDO{}, which is stated by the following theorem.
Notice that the time complexities presented are dependent on $\omega\ge 2$, which is the exponent of $n$ in the runtime of the fastest FMM algorithm on two squared matrices of size $n\times n$.
The currently best upper bound on $\omega$ is $\omega < 2.371339$ given by~\cite{ADWVXZ25}.
However, for clarity, we choose to express our tradeoffs in terms of $\omega$.

\begin{theorem}\label{thm:ACDO}
For any real $0\le b\le 1$, there exists an \ACDO{} algorithm with preprocessing time $\Tilde{O}(n^a)$ and query time $\tilde{O}(n^b)$ where
\[\begin{dcases}
        a +  \frac{2}{\omega - 1}b = 2 &\text{if } 0 \leq  b \leq \frac{\omega -1}{\omega + 1} \\
        \frac{2}{\omega - 1}a + b = {\frac{\omega+1}{\omega-1}}  &\text{if } \frac{\omega -1}{\omega + 1} \leq b \leq 1.
    \end{dcases}\]

\end{theorem}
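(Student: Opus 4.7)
The approach is to reduce the $(1+\varepsilon)$-approximate \CDO{} problem on an array to a sequence of set-disjointness queries at geometrically-spaced distance scales, and then apply rectangular fast matrix multiplication on heavy colors (balanced against query-time enumeration on light colors) in the spirit of Kopelowitz and Vassilevska-Williams~\cite{KVW20}.

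\emph{Scale reduction.}  For $i=0,1,\ldots,O(\varepsilon^{-1}\log n)$, set $r_i=(1+\varepsilon)^i$, partition the array $[n]$ into intervals of length $\Theta(\varepsilon r_i)$, and for each color $c$ let $S_c^i$ be the set of interval indices at scale $i$ containing a point of $c$, padded with the $O(1/\varepsilon)$-neighborhood.  A standard argument gives $\delta_{c,c'}\in(r_i/(1+\varepsilon),r_i]$ iff $r_i$ is the smallest scale with $S_c^i\cap S_{c'}^i\ne\emptyset$, so a $(1+\varepsilon)$-approximation of $\delta_{c,c'}$ can be obtained by binary-searching over the $\tilde O(1)$ scales and answering one set-disjointness query at each step.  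The total input size across scales is $\tilde O(n)$.

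\emph{Heavy/light partition and FMM.}  Set $\tau=n^b$ and call a color \emph{heavy} if it contains more than $\tau$ points; there are at most $n/\tau$ heavy colors.  Queries involving a light color are answered in $\tilde O(\tau)=\tilde O(n^b)$ time by enumerating the $\le \tau$ points of the light color and binary-searching in the other color's sorted list.  For heavy-heavy queries, for each scale $r$ we form the boolean matrix $M_r$ whose rows are heavy colors and whose columns are scale-$r$ intervals, and compute the boolean product $M_r M_r^{\mathrm T}$ via rectangular FMM, storing the resulting $(n/\tau)\times(n/\tau)$ lookup table.  The per-scale cost is $(n/\tau)^{\omega(1,\mu_r,1)}$ with $\mu_r=\log_{n/\tau}(n/r)$; summing over the retained scales and using convexity of $\omega(1,\cdot,1)$ together with the equality $\omega(1,\alpha,1)=2$ for $\alpha$ up to its known threshold, the total cost is $\tilde O(n^{2-2b/(\omega-1)})$ in the first regime ($b\le(\omega-1)/(\omega+1)$).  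For the second regime ($b\ge(\omega-1)/(\omega+1)$) we introduce a second threshold $\tau_2\ge\tau$: the $(n/\tau_2)^2$ pairs of ``very heavy'' colors admit exact pairwise distance precomputation via merges in total time $\tilde O(n^2/\tau_2)$, while the remaining heavy colors are handled by the FMM approach on a smaller matrix, and optimizing $\tau_2$ yields $a=(\omega+1-(\omega-1)b)/2$.

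\emph{Main obstacle.}  The principal difficulty is handling heavy-heavy queries whose true distance is smaller than the smallest scale at which FMM was performed, since neither merging the two sorted lists nor brute enumeration fits within the $\tilde O(n^b)$ query budget (both colors may contain up to $n$ points).  The fix is to equip each FMM with witness recovery (via Alon--Naor), so that each boolean entry of $M_r M_r^{\mathrm T}$ points back to an $O(\tau)$-length interval witnessing a close pair; at query time one scans the at most $O(\tau)$ integer positions inside this witness interval (using the one-dimensional structure of the array) and computes a $(1+\varepsilon)$-approximate closest pair in $\tilde O(\tau)$ time.  Guaranteeing that the witness yields a true $(1+\varepsilon)$-approximation rather than merely a pair at distance $O(\tau)$ requires refining the scale granularity near $\tau$ by a constant number of extra $(1+\varepsilon)$-spaced scales and scanning a constant number of adjacent witness intervals.
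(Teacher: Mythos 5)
There is a genuine gap in how you handle heavy--heavy pairs whose distance lies below the smallest scale at which you run FMM, and this is precisely the ingredient the paper supplies by other means. A single Alon--Naor witness for a $1$ entry of $M_r M_r^{\mathrm T}$ at your smallest retained scale $r\approx\tau$ only certifies that \emph{some} pair of points of the two colors lies within $O(\tau)$ of the witness interval; the genuinely closest pair can be at distance $1$ in an entirely different part of the array, so your $O(\tau)$-length local scan may return a value of order $\tau$, which is not a $(1+\varepsilon)$-approximation. Adding a constant number of extra $(1+\varepsilon)$-spaced scales near $\tau$ cannot repair this, because the true distance can be polynomially smaller than $\tau$; and running FMM at all scales down to $O(1)$ is too expensive (the finest-scale product costs $n^{\omega}/\tau^{\omega-1}$ under the square-MM-based bound, i.e.\ $n^{2}/\tau$ already at $\omega=2$, versus the regime-one target $n^{2-2b/(\omega-1)}=n^{2}/\tau^{2}$). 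The paper avoids query-time recovery altogether: during preprocessing it sweeps all points with a window of width $\Theta(W/\varepsilon)$ and records in $E^*$ the \emph{exact} distance of every heavy--heavy pair at distance at most $\frac{1+2\varepsilon}{\varepsilon}W$, in $O(nW/\varepsilon)$ total time; only pairs above this threshold are approximated from the Boolean products (a single block width $W$, with the window stretched to $(1+\varepsilon)^{\ell}W$ on one side). Balancing this $O(nW)$ term against the FMM cost is exactly what fixes $W$ and produces the two regimes; this component has no analogue in your proposal, and without it the heavy--heavy queries at small distances are simply not answered correctly.

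A second, related problem is your cost accounting in the first regime. With smallest scale $\approx\tau$ the matrix at that scale has $\Theta(n/(\varepsilon\tau))$ columns, so the product is essentially a square Boolean product of dimension $n/\tau$ costing $(n/\tau)^{\omega}$; the exponent gap is $(\omega-2)\bigl(1-b\tfrac{\omega+1}{\omega-1}\bigr)>0$ for every $\omega>2$ and every $b$ strictly below $\tfrac{\omega-1}{\omega+1}$ (at $b=0$ it is $n^{\omega}$ versus the claimed $n^{2}$), and invoking $\omega(1,\alpha,1)=2$ does not help since this bottleneck product is square, not strongly rectangular. In the paper the coarsest block width is $W\approx n/\tau^{2/(\omega-1)}\gg\tau$ in this regime (and $W\approx (n/\tau)^{(\omega-1)/2}\le\tau$ in the other, obtained from the same single algorithm rather than a second threshold $\tau_2$), which is again why all sub-$W$ distances must be, and are, handled by the preprocessing-time brute force rather than at query time; your regime-two sketch inherits the same small-distance problem for the heavy-but-not-very-heavy pairs.
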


\begin{theorem}\label{thm: HACDO by ACDO}
For any real $0\le b\le 1$, there exists an \AMCDOCH{} algorithm with preprocessing time $\tilde{O}(n^a)$ and query time $\tilde{O}(n^b)$ where
\[\begin{dcases}
        a +  \frac{2}{\omega - 1}b = 2 &\text{if } 0 \leq  b \leq \frac{\omega -1}{\omega + 1} \\
        \frac{2}{\omega - 1}a + b = {\frac{\omega+1}{\omega-1}}  &\text{if } \frac{\omega -1}{\omega + 1} \leq b \leq 1.
    \end{dcases}\]
\end{theorem}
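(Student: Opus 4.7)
The plan is to reduce the \AMCDOCH{} problem to the \ACDO{} problem (Theorem~\ref{thm:ACDO}) using a heavy path decomposition of the color hierarchy tree $T_S$. First, I compute the heavy path decomposition of $T_S$. For each heavy path $H$ with top vertex $r_H$ and vertices $v_0 = r_H, v_1, \ldots, v_k$ from top to bottom, I construct an \ACDO{} instance $A_H$ whose points are $P(r_H)$, and in which each point $p\in P(r_H)$ is colored by a descriptor of where the path in $T_S$ from $r_H$ to the deepest color of $p$ leaves $H$: either the specific light child of some $v_i \in H$ at which the path exits, or a symbol $(v_j, \star)$ denoting that the path stays on $H$ and terminates at $v_j$. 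The same construction is then applied recursively to every subtree rooted at a light child of a vertex of $H$.

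For preprocessing, the standard heavy path decomposition property gives $\sum_H |P(r_H)| = O(n \log n)$, and since $|P(r_H)| \le n$ for every $H$, an application of Theorem~\ref{thm:ACDO} together with $|P(r_H)|^a \le n^{a-1}\cdot |P(r_H)|$ for $a \ge 1$ (and a symmetric concavity bound for $a < 1$) yields $\sum_H \tilde O(|P(r_H)|^a) = \tilde O(n^a)$, matching the preprocessing tradeoff of Theorem~\ref{thm:ACDO}.

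For a query $\delta_{c,c'}$ with $u_c$ and $u_{c'}$ incomparable in $T_S$, I compute $v = \mathrm{LCA}(u_c,u_{c'})$, locate the unique heavy path $H$ on which $v$ lies, and issue an \ACDO{} query to $A_H$ between the colors corresponding to the exits from $H$ at $v$ along which $u_c$ and $u_{c'}$ descend, in time $\tilde O(n^b)$. When $u_c$ or $u_{c'}$ lies strictly deeper in a light subtree than the immediate exit, the algorithm recurses into the corresponding sub-structure; since the recursion tree has depth $O(\log n)$, the total query time remains $\tilde O(n^b)$ after absorbing the logarithmic factor. In the degenerate case where one of $u_c,u_{c'}$ is an ancestor of the other, the answer is simply $0$.

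The main obstacle is the deep-query case: a direct \ACDO{} query to $A_H$ only distinguishes the ``exit'' into a light subtree and not an arbitrary deeper vertex, so naively it returns the lower bound $\delta_{P(w_c), P(w_{c'})} \le \delta_{c,c'}$ rather than a $(1+\varepsilon)$-approximation of $\delta_{c,c'}$ itself. Overcoming this requires combining $A_H$'s answer with the recursive structures built on the relevant light subtrees; the $(1+\varepsilon)$-slack, together with the geometric scale-based bucketing underlying the ACDO algorithm of Theorem~\ref{thm:ACDO}, is leveraged to ensure that the returned value is a $(1+\varepsilon)$-approximation of $\delta_{c,c'}$ even after the $O(\log n)$-level refinement, without multiplicative blow-up of the error.
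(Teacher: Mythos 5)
There is a genuine gap, and it sits exactly at the step you flag at the end: the ``deep-query case'' is not a technicality that the $(1+\varepsilon)$ slack can absorb — it is the entire difficulty of the problem. In your structure $A_H$, a query between the exit colours $w_c$ and $w_{c'}$ returns (an approximation of) $d\bigl(P(w_c),P(w_{c'})\bigr)$, and since $P(c)\subseteq P(w_c)$ and $P(c')\subseteq P(w_{c'})$ this value can be \emph{arbitrarily} smaller than $\delta_{c,c'}=d\bigl(P(c),P(c')\bigr)$; there is no multiplicative relation between the two, so no approximation slack or scale bucketing inside the \ACDO{} structure can recover $\delta_{c,c'}$ from it. Recursing into the light subtrees does not repair this: the recursive structure built for the subtree rooted at $w_c$ contains only the points of $P(w_c)$ and carries no information about where the points of colour $c'$ lie (they sit in a different light subtree), so no query to it can produce the cross-distance between the two restricted sets. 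What you would need is a structure answering distances between arbitrary restricted descendant sets hanging off two different exits — but that is precisely the \AMCDOCH{} problem again, so the proposed recursion is circular rather than a reduction. Two smaller problems: when one of $u_c,u_{c'}$ descends through the \emph{heavy} child of the LCA, ``the colour corresponding to the exit'' is not a single colour of $A_H$ but a union of many exit colours, so even the top-level lookup is not one \ACDO{} query; and the claim that the error does not blow up over $O(\log n)$ refinement levels is asserted without any mechanism.

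For comparison, the paper's proof avoids restricted-subset queries altogether by exploiting a structural fact about colour hierarchies (from the earlier work of Kopelowitz and Krauthgamer): after ordering the points by a pre-order traversal of $T_S$, every colour class is a \emph{contiguous interval} $A[x_c,y_c]$ of the point array. The algorithm partitions $A$ into $\lceil n/\tau\rceil$ blocks, recolours the points with one colour per block, and runs the Theorem~\ref{thm:ACDO} algorithm on this recolouring; since every block colour is heavy, each block-pair distance is read off $E^*$ in $O(1)$ time, so the full $(n/\tau)\times(n/\tau)$ approximate block-distance matrix $\hat B$ is built cheaply and stored in a \DRMQ{} structure. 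A query then decomposes each colour's interval into a block-aligned middle part, handled by a single rectangle-minimum query on $\hat B$, plus at most $O(\tau)$ fringe points, handled by \RNNS{} queries, and the correctness argument only needs that the middle-part answer is a $(1+\varepsilon)$-approximation of a block-pair distance that witnesses $\delta_{c,c'}$. If you want to pursue a heavy-path route you must first supply an analogue of this interval/block mechanism for answering distances between arbitrary deep descendant sets; as written, your proposal does not have one.
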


We remark that it is straightforward to  adapt our \ACDO{} and \AMCDOCH{} algorithms to  return the two points (one of each color) that define the distance being returned. See discussion in~\ref{app:return points}.
Thus, by combining Theorem~\ref{thm: HACDO by ACDO} (assuming that the points are returned) with the reduction of the snippets problem to the \MCDOCH{} problem given in~\cite{KK16}, we obtain the following tradeoff for a $1+\varepsilon$ approximation version of the snippets problem.
\begin{theorem}\label{thm:snippets}
For any fixed $0< \varepsilon \le 1$, and $1\le a \le 2$, there exists an algorithm that preprocesses a text $T$ in $\tilde O(|T|^a)$ time such that given two pattern strings $P_1$ and $P_2$ where the distance between the  closest occurrences of the two patterns in $T$ is $\delta$, the algorithm returns a value $\hat \delta$ such that $\delta \le \hat \delta \le (1+\varepsilon)\delta$ in $\tilde O(|P_1|+|P_2| + \abs{T}^b)$ time, where
\[\begin{dcases}
        a +  \frac{2}{\omega - 1}b = 2 &\text{if } 0 \leq  b \leq \frac{\omega -1}{\omega + 1} \\
        \frac{2}{\omega - 1}a + b = {\frac{\omega+1}{\omega-1}}  &\text{if } \frac{\omega -1}{\omega + 1} \leq b \leq 1.
    \end{dcases}\]
\end{theorem}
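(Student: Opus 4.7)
The plan is to derive Theorem~\ref{thm:snippets} by composing the black-box reduction of~\cite{KK16} from the snippets problem to \MCDOCH{} with the \AMCDOCH{} algorithm of Theorem~\ref{thm: HACDO by ACDO}. Since the two problems share the same tradeoff curve, most of the work is packaging the reduction and checking that the approximation factor is preserved exactly.

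\textbf{Preprocessing.} I would build the suffix tree $T_S$ of $T$ in $\tilde O(|T|)$ time, identify each of its $|T|$ leaves with the starting position of the corresponding suffix so that the leaves live in the array metric $[|T|]$, and create one color $c_v$ per vertex $v$ of $T_S$. Each leaf (point) is then multi-colored by the colors of its root-to-leaf path, which makes $P(c_v)$ exactly the set of starting positions of suffixes under $v$. Subtrees of a rooted tree form a laminar family, so $\{P(c_v)\}_v$ is a color hierarchy whose representing tree is $T_S$ itself, and the total size of the resulting multi-colored set is $\Theta(|T|)$. Running the \AMCDOCH{} algorithm of Theorem~\ref{thm: HACDO by ACDO} with parameter $\varepsilon$ on this instance uses $\tilde O(|T|^a)$ time and produces a data structure whose query time lies on the prescribed curve.

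\textbf{Query.} Given patterns $P_1, P_2$, I would locate each $P_i$ in $T_S$ by standard letter-by-letter descent in $\tilde O(|P_i|)$ time, obtaining a vertex $v_i$ (promoting to the lower endpoint of an edge whenever the search ends strictly inside an edge, which is safe because the leaf sets agree) whose subtree leaves are exactly the occurrences of $P_i$ in $T$. Thus the snippet distance $\delta = d(\mathrm{Occ}(P_1), \mathrm{Occ}(P_2))$ is precisely $\delta_{c_{v_1}, c_{v_2}}$ in the array metric $[|T|]$. Invoking the oracle on $(c_{v_1}, c_{v_2})$ returns $\hat \delta$ with $\delta \le \hat \delta \le (1+\varepsilon)\delta$ in $\tilde O(|T|^b)$ time, giving the claimed total query complexity of $\tilde O(|P_1|+|P_2|+|T|^b)$.

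The main obstacle here is essentially book-keeping rather than mathematics: verifying that the reduction preserves the approximation factor exactly (which it does, since the two distances are literally equal, not merely comparable) and that the suffix-tree descent and implicit-vertex handling fit inside the stated time bounds. If one also wishes to return a witnessing pair of close occurrences rather than only $\hat \delta$, the extension in~\ref{app:return points} applies without change, so the same tradeoff is achieved for the full snippets problem.
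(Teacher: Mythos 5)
Your proposal matches the paper's own argument: the paper proves Theorem~\ref{thm:snippets} exactly by composing the suffix-tree reduction of~\cite{KK16} (one color per suffix-tree vertex, each leaf multi-colored by its root-to-leaf path, forming a color hierarchy on the array metric of text positions) with the \AMCDOCH{} algorithm of Theorem~\ref{thm: HACDO by ACDO}, with pattern loci found by descent in $\tilde O(|P_1|+|P_2|)$ time and the witness-point issue deferred to Section~\ref{app:return points}. Your write-up fills in the same steps, so it is correct and essentially identical in approach.
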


\paragraph*{A complexity separation between exact and approximate solutions.}
We remark that by combining the reduction given in~\cite[Theorem 8]{KK16} from \SetDisjointness{} to the \CDO{} problem with the CLBs for \SetDisjointness{} given in~\cite[Theorem 6]{KVW20} for the case of $\omega =2$ (which some researchers believe should be obtainable), Theorems~\ref{thm:ACDO},~\ref{thm: HACDO by ACDO} and~\ref{thm:snippets} are all conditionally optimal\footnote{For Theorems~\ref{thm:ACDO} and~\ref{thm: HACDO by ACDO}, the optimality follows directly from applying \cite[Theorem 8]{KK16} and~\cite[Theorem 6]{KVW20}.
For Theorem~\ref{thm:snippets}, the optimality follows since an approximate solution for the snippets problem can be used to solve \ACDO{} on an array as follows: treat each color as a character, and then the array becomes a string. Preprocess the string using the snippets algorithm, so that given an \ACDO{} query on two colors $c,\hat c$, query the snippets algorithm with patterns $P=c$ and $\hat P= \hat c$.
Thus, the \ACDO{} lower bound applies to approximate snippets.
} (up to subpolynomial factors).
A natural question to ask is whether one can design exact algorithms with the same tradeoff bounds. We provide evidence that this is not possible, based on a \emph{strong} version of the popular all-pairs shortest paths (\APSP{}) conjecture~\cite{CVWX23}, thereby demonstrating a separation between \CDO{} and \ACDO{}.

The Strong-\APSP{} hypothesis, introduced by Chan, Vassilevska-Williams and Xu~\cite{CVWX23}, states that for a graph with $\hat n$ vertices\footnote{We use $\hat n$ to differentiate from $n=|S|$ in the various \CDO{} problems.}, even if all of the edge weights  are in\footnote{Actually, in \cite{CVWX23} the weights are bounded by $\hat n^{3-\omega}$, but since $\omega \ge 2$ we can  bound the weights by $\hat n$.} $[\hat n]$, \APSP{} still does not have a truly subcubic algorithm.
A closely related problem to $\APSP{}$ is the \emph{\MPMP{}}  problem, where the input is two $\hat n$ by $\hat n$ matrices  $A = \{a_{i,j}\}$ and $B =\{b_{i,j}\} $, and the output is the  matrix $D=\{d_{i,j}\}$ where $d_{i, j} = \min_{1\le k\le \hat n}(a_{i, k}+b_{k, j})$.
Shoshan and Zwick~\cite{SZ99} showed that solving \APSP{} with weights $[M]$ is equivalent (in the sense of having the same runtime) to solving \MPMP{} with entries in $[O(M)]$.
Thus, the Strong-\APSP{} hypothesis implies that there is no  truly subcubic time \MPMP{} algorithm even when all of the entries are in $[\hat n]$.

By reducing \MPMP{} with entries in $[\hat n]$ to \MCDO{}, we are able to prove the following CLB in Section~\ref{sec:minplus LB}, which matches the upper bound given in~\cite{KK16}.

\begin{theorem}\label{thm:minplus LB}
    Assuming the Strong-\APSP{} conjecture, any algorithm for \MCDO{} on $n$ points in an array of size $O(n)$ with $\Tilde{O}(n^a)$ preprocessing time and $\Tilde{O}(n^b)$ query time, respectively, must obey $a+b \ge 2$.
\end{theorem}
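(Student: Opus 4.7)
The plan is to reduce the $(\min,+)$-matrix product problem on $\hat n\times\hat n$ matrices with entries in $[\hat n]$ to \MCDO{} on an array of size $O(n)$, so that a hypothetical \MCDO{} algorithm with $a+b<2$ would yield a truly subcubic $(\min,+)$-matrix product algorithm; by the Shoshan--Zwick equivalence this would contradict the Strong-\APSP{} hypothesis, forcing $a+b\ge 2$.

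For the construction, given matrices $A,B\in[\hat n]^{\hat n\times\hat n}$ and a parameter $K\in[1,\hat n]$, partition the inner index set $[\hat n]$ into $\hat n/K$ blocks of size $K$ and, for each block $\mathcal K$, build a separate \MCDO{} instance. Reserve anchors $s_k$ for $k\in\mathcal K$ spaced more than $4\hat n$ apart on the array; for each row $i\in[\hat n]$ and each $k\in\mathcal K$ place a point at $s_k-A_{i,k}$ with color $r_i$; for each column $j\in[\hat n]$ and each $k\in\mathcal K$ place a point at $s_k+B_{k,j}$ with color $c_j$. The instance contains $n=2\hat n K$ colored points in an array of size $\Theta(\hat n K)=\Theta(n)$, as required. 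Within the block at $s_k$ the distance from the $r_i$-point to the $c_j$-point equals $A_{i,k}+B_{k,j}\in[0,2\hat n]$, while cross-block distances strictly exceed $2\hat n$; hence the \MCDO{} query on $(r_i,c_j)$ returns $\min_{k\in\mathcal K}(A_{i,k}+B_{k,j})$, and the minimum of the $\hat n/K$ per-block answers recovers $D_{i,j}$.

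Running the hypothesized \MCDO{} algorithm on each instance costs $\tilde O((\hat n K)^a)$ preprocessing plus $\hat n^2$ queries at $\tilde O((\hat n K)^b)$ each, so the overall cost of computing $D$ is
\[
\tilde O\bigl(\hat n^{1+a}K^{a-1}+\hat n^{3+b}K^{b-1}\bigr).
\]
Writing $K=\hat n^t$ and balancing the two summands at $t^{*}=(2+b-a)/(a-b)$ produces a common exponent that simplifies to $3+(a+b-2)/(a-b)$, strictly less than $3$ whenever $a+b<2$, contradicting Strong-\APSP{}. The main obstacle is that $t^{*}$ lies in the admissible interval $[0,1]$ precisely when $a-b\ge 1$; in the complementary regime one has to choose $K$ at an endpoint of $[1,\hat n]$ and reanalyze the two summands, and the principal technical step is to verify that for every $(a,b)$ with $a+b<2$ some admissible $K$ still produces a subcubic total cost, consistently handling both the prep-heavy and query-heavy corners of the tradeoff.
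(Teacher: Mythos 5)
Your encoding is essentially the paper's (your points $s_k-A_{i,k}$ and $s_k+B_{k,j}$ around well-separated anchors play the role of the paper's $a'_{i,k}=(M-a_{i,k})+9Mk$ and $b'_{k,j}=b_{k,j}+3M+9Mk$, with duplicate points absorbed by multi-coloring), but blocking only the \emph{inner} index caps the aspect ratio of your instances at balanced, and that loses half of the claimed tradeoff. In your own cost expression, the query side is $\hat n^{3+b}K^{b-1}$, which over the admissible range $1\le K\le\hat n$ is minimized at $K=\hat n$, where it equals $\hat n^{2+2b}$. Hence for any $b\ge 1/2$ no choice of $K$ makes the total subcubic: making the query side subcubic requires $K^{1-b}>\hat n^{b}$, i.e.\ $K>\hat n^{b/(1-b)}>\hat n$. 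So points such as $(a,b)=(1.2,0.7)$ or $(1,0.9)$, which satisfy $a+b<2$ and must be excluded by the theorem, are untouched by your reduction. The step you defer as ``the principal technical step'' (finding an admissible endpoint $K$ in the regime $a-b<1$) is therefore not a technicality --- it fails outright for $b\ge1/2$. What your argument does correctly establish is the statement restricted to $b<1/2$: there one checks that $\frac{b}{1-b}<\frac{2-a}{a-1}$ is equivalent to $a+b<2$ (for $1<a<2$; the cases $a\le1$ and $a\ge2$ are immediate), so an admissible $K$ exists and your balancing gives the contradiction.

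The missing ingredient, and the one the paper uses, is hardness of \emph{unbalanced} \MPMP{} with inner dimension $\hat m=\hat n^{x}$ for $x$ possibly much larger than $1$, while the entries stay bounded by the \emph{smaller} dimension $\hat n$; the paper derives the $(\hat n^{2}\hat m)^{1-o(1)}$ hardness of this version from Strong-\APSP{} and then sets $x=\frac{2}{a-b}-1$. The point is that the oracle is then built on $n=\Theta(\hat n\hat m)$ points but only $\hat n^{2}\ll n^{2}$ queries are asked, so even $b$ close to $1$ becomes affordable by taking $x$ large, and the array size stays $O(M\hat m)=O(n)$ precisely because the weight bound $M=\hat n$ scales with the output dimension rather than with $\hat m$. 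You cannot simulate this from a balanced instance with your blocking: splitting the inner index only realizes aspect ratios $K\le\hat n$ (i.e.\ $x\le1$), and splitting the output rows/columns instead keeps $\hat n$ anchors spaced $\Theta(\hat n)$ apart while shrinking the point count, so the array size $\Theta(\hat n^{2})$ is no longer $O(n)$ --- unless the weights are reduced to the output dimension, which is exactly the unbalanced small-weight hardness statement your proof lacks. As written, the proposal proves the theorem only on the $b<1/2$ portion of the curve.
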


Moreover, we are able to leverage the ideas used in the proof of Theorem~\ref{thm:minplus LB} to reduce \MPMP{} with entries in $[\hat n]$ to \CDO{} via a randomized reduction, thereby obtaining a CLB under the assumption that the Strong-\APSP{} hypothesis holds even for  randomized algorithms (either in expectation or with high probability).

\begin{theorem}\label{thm:CDO LB rand}
    Assuming the Strong-\APSP{} hypothesis for randomized algorithms, any algorithm for \CDO{} on $n$ points in an array of size $O(n)$ with $\Tilde{O}(n^a)$ preprocessing time and $\Tilde{O}(n^b)$ query time, respectively, must obey $a+b \ge 2$.
\end{theorem}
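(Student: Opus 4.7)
The plan is to adapt the reduction from \MPMP{} to \MCDO{} used in the proof of Theorem~\ref{thm:minplus LB} so that it targets \CDO{} rather than \MCDO{}, using randomization to convert the multi-colored \MCDO{} instance produced there into a single-colored \CDO{} instance while preserving the $a+b$ tradeoff.

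I would first recall from the proof of Theorem~\ref{thm:minplus LB} the \MCDO{} instance associated to matrices $A,B\in[\hat n]^{\hat n\times\hat n}$: points are laid out on an array of size $O(n)$, organized into blocks indexed by the inner summation index $k\in[\hat n]$, with row color $c_i$ placed at offset $-a_{i,k}$ relative to block $k$'s center and column color $c_j'$ placed at offset $+b_{k,j}$ on the opposite side, and with block spacing chosen so that intra-block distances strictly dominate inter-block distances. A query on $(c_i,c_j')$ then returns $\min_k(a_{i,k}+b_{k,j})=D_{i,j}$, and the internal balancing argument of Theorem~\ref{thm:minplus LB} converts any \MCDO{} algorithm with $a+b<2$ into a truly sub-cubic algorithm for \MPMP{}.

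To port this to \CDO{}, I would apply a randomized color-sampling step: for each point $p$ of the \MCDO{} instance with color list $C_p$, independently keep a single color chosen uniformly at random from $C_p$ and discard the remaining colors, producing a valid single-colored \CDO{} instance. For any query $(c,c')$ and any pair $(p,p')$ of points realizing $\delta_{c,c'}$ in the original \MCDO{} instance, both colors survive with probability at least $1/\kappa^2$, where $\kappa=\max_p|C_p|$. Independently repeating the construct-preprocess-query pipeline $R=\tilde O(\kappa^2)$ times and returning the minimum distance across the $R$ copies recovers every $\delta_{c,c'}$ with high probability, by a union bound over the $\hat n^2$ query pairs needed to assemble $D$. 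Combining this with the balancing argument of Theorem~\ref{thm:minplus LB} yields a randomized \MPMP{} algorithm of runtime $\tilde O(R)\cdot\hat n^{3-\Omega(1)}$, which is truly sub-cubic whenever $R=\hat n^{o(1)}$ and therefore contradicts randomized Strong-\APSP{}.

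The main obstacle will be bounding $\kappa$: in the raw \MCDO{} construction, rows $i\ne i'$ with $a_{i,k}=a_{i',k}$ place their color markers at the same position in block $k$, and analogously for columns, so $\kappa$ could be as large as $\hat n$. I would counter this by inserting independent per-row random shifts $r_i$ and per-column random shifts $s_j$, each drawn uniformly from a range of size $\tilde O(\hat n)$, relocating color $c_i$ in block $k$ to position $L_k-a_{i,k}-r_i$ and color $c_j'$ in block $k$ to position $L_k+b_{k,j}+s_j$. Since $r_i$ and $s_j$ are independent of $k$, the minimum over $k$ of the resulting intra-block distances equals $D_{i,j}+r_i+s_j$, which can be decoded by subtracting the known shifts. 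A standard balls-in-bins argument yields $\kappa=O(\log\hat n/\log\log\hat n)$ with high probability while inflating the array length by only a polylogarithmic factor. Once $\kappa=\mathrm{polylog}(\hat n)$ we obtain $R=\mathrm{polylog}(\hat n)=n^{o(1)}$, the pipeline remains truly sub-cubic, and the claimed $a+b\ge 2$ lower bound for randomized \CDO{} follows.
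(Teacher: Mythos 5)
Your proposal is correct and lands on the same overall architecture as the paper's proof: independent per-row and per-column random offsets $r_i,s_j$, a polylogarithmic number of independent repetitions, decoding each query by subtracting the known offsets and taking the minimum over repetitions, and then invoking the unbalanced \MPMP{} balancing argument from Theorem~\ref{thm:minplus LB}. Where you genuinely differ is in how the multi-colored instance is turned into a legally single-colored \CDO{} instance. You keep every colliding point and sample one color per point uniformly from its color list, which forces you to control $\kappa=\max_p|C_p|$ via a balls-in-bins maximum-load bound (so that $R=\tilde O(\kappa^2)$ repetitions suffice); the paper instead simply \emph{deletes all copies} of any position that is hit more than once and observes that, thanks to the random offsets, each individual point needed for a given entry survives with constant probability $\approx 1/e$, so each pair survives with probability $\approx 1/e^2$ and $\Theta(\log(\hat n\hat m))$ repetitions suffice with no multiplicity bound at all. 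Your route costs you an extra (routine) concentration lemma and, if the shift range is $\tilde O(\hat n)$ rather than $[\hat n]$, a polylogarithmic blow-up of the array relative to the stated ``array of size $O(n)$'' (easily repaired by using shifts in $[\hat n]$ or padding with dummy points); the paper's deletion trick buys a shorter correctness argument since removed points can only increase color distances, giving $\tilde d_{i,j}-r_i-s_j\ge d_{i,j}$ unconditionally with equality whenever the realizing pair survives. One small wording slip: you want inter-block distances to dominate (exceed) intra-block ones so that the minimum is always realized inside a block, not the other way around as written; with the block spacing scaled against the maximum shift this holds in both constructions.
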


The proof of Theorem~\ref{thm:CDO LB rand}, which appears in \Cref{sec:CDO lb rand}, is based on the proof of Theorem~\ref{thm:minplus LB}, combined with probabilistic techniques in order to create instances that are not multi-colored.

Theorem~\ref{thm:CDO LB rand} combined with Theorem~\ref{thm:ACDO} implies that the exact version of \CDO{} is strictly harder than the approximate version.
Moreover, we remark that the CLBs for \CDO{} apply also to the snippets problem, since the special case of $\mathbf M$ defined by locations in an array is captured by the snippets problem when $P_1$ and $P_2$ are each a single character.
Thus, the CLB of Theorem~\ref{thm:CDO LB rand} applies also to the snippets problem, and so the exact version of the snippets problem is also strictly harder than the approximate version.

\section*{Additional Related Work}

A problem closely related to the \CDO{}  is the (approximate) vertex-labeled distance oracles for graphs (VLDO) problem, where the goal is to preprocess a colored graph \( G \) with $n$ vertices, so that given a vertex query \( v \) and a color \( c \), one can quickly return (an approximation of) \( d(v, c) \).
Hermelin, Levy, Weimann, and Yuster~\cite{HLWY11} introduced the VLDO problem and designed a solution using \( O(kn^{1 + 1/k}) \) expected space, with stretch factor  \( 4k - 5 \) and \( O(k) \) query time.
They also showed how to reduce the space usage to \( O(kN^{1/k}) \) but the stretch factor is exponential in \( 2k - 21 \).
Chechik~\cite{che12} later showed how to lower the stretch back to \( 4k - 5 \).
For planar graphs,
Evald, Fredslund-Hansen, and Wulff-Nilsen~\cite{EFW21} designed a near-optimal exact tradeoff using $n^{1+o(1)}$ space with $\Tilde{O}(1)$ query time, or  $\Tilde{O}(n)$ space with $n^{o(1)}$ query time.
Li, Ma, and Ning~\cite{LMN11} designed a $1+\varepsilon$ approximation  for planar graphs.
\section{Preliminaries and Algorithmic Overview}
Let $[n] = \{1,2,\ldots,n\}$.
Let $S \subset  \mathbb{Z}$ be a set of  $n$ integers.
Let $\max(S)$ be the largest integer in $S$ and let $\min (S)$ be the smallest integer in $S$.
For integers $i,j\in [n]$ where $i< j$, let $S[i,j] = \{p \in S : i \leq p \leq j\}$.

Our algorithms make use of the following Nearest Neighbor Search (NNS) data structures.

\begin{problem}
    [The Nearest Neighbour Search problem ~\cite{AGN14,An09,Kl97,Van75}] Given a set $S\subseteq \mathbf M$ of size $n$, preprocess $S$ to support the following query: given an integer $p$,
    return $\argmin_{p' \in S}\{d(p,p')\}$.
\end{problem}

\begin{problem}
    [The Range Nearest Neighbour Search (RNNS) problem ~\cite{BGKS14,BPS16,CIT12,KKFL14,KKL07,KL04,MNV16,NN12}]
    Given a set $S$ of $n$ integers, preprocess $S$ to support the following query: given $i,j\in [n]$ and an integer $p$,
    return $\argmin_{p' \in S[i,j]}\{d(p,p')\}$.
\end{problem}

\subsection{Algorithmic Overview}\label{sec:overview}

\paragraph*{Generic (Approximate) \Cdo{} Algorithm}
Our algorithm for \ACDO{} is based on a generic algorithm whose structure follows the structure of the algorithm  in~\cite{KK16} for the exact \Cdo{} problem.
The generic algorithm is a classic heavy-light algorithm; our algorithms are  a new  \emph{implementation} of the generic algorithm.

For an integer parameter $0 \leq \tau \leq n$, a color $c\in \mathcal{C}$ is said to be \emph{heavy} if $\abs{P(c)} \geq \tau$ and \emph{light} otherwise.
Let $\mathcal{H} = \{h_1,h_2,\ldots,h_{\abs{\mathcal{H}}}\}$ be the set of heavy colors, and let $\mathcal{L}$ be the set of light colors.
Notice that $\abs{\mathcal H}\leq \frac{n}{\tau}$.
In the preprocessing phase, for each color $c\in \mathcal{C}$, the algorithm stores $P(c)$ in an \NNS{}  data structure, denoted by  $\text{\NNS}_{c}$.
In addition, the algorithm pre-computes a  matrix $E^*=\{e^{*}_{i,j}\}$ of size $\abs{\mathcal{H}} \times \abs{\mathcal{H}}$, such that $\delta_{h_i,h_j} \le e^{*}_{i,j} \le (1+\varepsilon) \delta_{h_i,h_j}$.
An \ACDO{} query on $c,c' \in \mathcal{C}$ is processed as follows.
If both $c,c' \in \mathcal{H}$, then, without loss of generality, $c= h_i$ and  $c' = h_j$. In such a case the
algorithm returns $e^{*}_{i,j}$,
which is a $(1+\varepsilon)$ approximation of $\delta_{h_i,h_j}$.
Otherwise, without loss of generality, $C$ is a light color, and the algorithm returns $\min_{\hat{p}\in P(c)}\{d(\hat{p},c')\} = \delta_{c,c'}$, by performing   $\abs{P(c)}\le \tau$ \NNS{} queries on $\text{\NNS{}}_{c'}$, one for each point in $C$.

\paragraph*{Time complexity.}
The preprocessing and query time  costs of the generic algorithm depend on the implementation of the \NNS{} data structure, and the time used for computing matrix $E^*$.
Thus, we express the time complexity of the generic algorithm as a function of $T_{p,\text{\NNS}}(t)$, $T_{q,\text{\NNS}}(t)$, and  $T_{E^*}(\mathcal{H})$, which are the  preprocessing time and query time of the \NNS{} data structure on a set of size $t$, and the time used to compute $E^*$, respectively.
In the preprocessing phase, the algorithm computes $E^*$ and creates an $\text{\NNS{}}$ data structure for each color in $\mathcal C$.
Thus, the preprocessing time cost is $$O(T_{E^*}(\mathcal{H})+\sum_{c\in \mathcal C} T_{p,\text{\NNS{}}}(P(c))).$$
In the query phase, the time cost is dominated by executing at most $\tau$ \NNS{} queries on a set of size at most $n$, and so the time cost is
$O(\tau \cdot T_{q,{\text{\NNS}}}(n)).$

\paragraph*{Constructing $E^*$ and solving \ACDO{} on an array.}
In Section~\ref{sec:construct E*} we describe an efficient algorithm for constructing $E^*$ when $\mathbf{M}$ is the metric of integers.
Our algorithm is structured as follows.
For pairs of heavy colors whose color distance is smaller than some threshold, the algorithm computes their distance via a straightforward brute-force computation.
The more challenging part is dealing with pairs of heavy colors with larger color distance.
Our approach is to compute a series of Boolean matrices with specially designed properties, so that after the computation of the matrices we are able scan the matrices and deduce a $(1+\varepsilon)$ approximation for every pair of colors with a rather large distance color.
We remark that to compute the matrices used for the approximations, our algorithm makes use of FMM, which, as noted in Section~\ref{sec:intro}, is required for obtaining improvements in runtime compared to the exact solution of~\cite{KK16}.

To complete the proof of Theorem~\ref{thm:ACDO}, in Section~\ref{sec:1D-ACDO} we analyze the runtime of the generic algorithm with the runtime of constructing $E^*$ when $\mathbf{M}$ is defined by locations in an array, and by plugging in known \NNS{} data structures.

\paragraph*{Solving \AMCDOCH{} on an array.}
In Section~\ref{sec: HACDO by ACDO} we prove Theorem~\ref{thm: HACDO by ACDO}.
Our algorithm is a reduction from \AMCDOCH{} on an array to \ACDO{} on an array.
The structure of our reduction somewhat resembles the structure of the \MCDOCH{} algorithm of~\cite{KK16}.
In particular, in the algorithm of~\cite{KK16} they partition the points into sets of size $\tau$, preprocess each set into an \RNNS{} data structure, and for each pair of sets apply $\tau$ \RNNS{} queries to compute the distance between the sets.
To obtain a faster runtime, we compute the distance between every pair of sets by recoloring the input points using the sets as a new color scheme and then applying our \ACDO{} algorithm with the new color set.
One important property of our \ACDO{} algorithm is that when a query takes place between two heavy colors, the query time is constant since the algorithm just looks up a single value in $E^*$.
It turns out that essentially all of the colors in the new color set are heavy, and so after the preprocessing phase of our \ACDO{} algorithm, we are able to compute the distance between a pair of sets in $O(1)$ time.
This property turns out to be helpful for our \AMCDOCH{} algorithm, which is described in Section~\ref{sec: HACDO by ACDO}.

\section{Computing $E^*$ when $\mathbf{M}$ is integers}\label{sec:construct E*} In this section, we show how to compute matrix $E^*$ when our metric $\mathbf{M}$ is one dimensional, and for  $p,p'\in \mathbf M$ we have $d(p,p')=\abs{p-p'}$.
We make the simplifying assumption that $\min(S) = 1$ since,  otherwise, one could shift $S$ by subtracting $\min(S)-1$ from each point.

Let $W$ be a positive integer parameter that will be determined later.
The construction algorithm for $E^*$ has two conceptual parts.
The first part deals with pairs of heavy colors with distance at most $\frac{1+2\varepsilon}{\varepsilon}W$, by computing the distances exactly using a brute-force method.
The second part deals with pairs of heavy colors with distance greater than $\frac{1+2\varepsilon}{\varepsilon}W$.
In this case, the algorithm computes an approximation of the distances by utilizing a series of matrices defined as follows.

Let $\ell_0 = \floor{\log_{1+\varepsilon}(\frac{1}{\varepsilon})} + 1 $ and let $\ell_{max} = \ceil{\log_{1+\varepsilon}({\max{(S)}})}$.
Let $A=\{a_{i,j}\}$ be a Boolean $\abs{\mathcal{H}} \times \ceil{\frac{\max{(S)}}{W}}$ matrix where
  \[  a_{i,j} =
        \begin{dcases}
            1  &  P(h_i)\cap S[(j-1)W +1, jW] \neq \emptyset  \\
            0 & \text{otherwise}
        \end{dcases}
        \]
Thus, matrix $A$ indicates for each heavy color and each block of size $W$ in $S$ which starts at a location following an integer multiple of $W$, whether the heavy color appears in the block or not.

For $\ell_0\leq \ell \leq \ell_{max}$, let $B^{(\ell)}=\{b^{(\ell)}_{i,j}\}$ be a Boolean $\ceil{\frac{\max{(S)}}{W}}\times \abs{\mathcal{H}}$ matrix where
    \[b^{(\ell)}_{i,j}=
        \begin{dcases}
            1 & P(h_{j})\cap S[(i-1)W + 1, iW + (1+\varepsilon)^{\ell}W] \neq \emptyset \\
            0 & \text{otherwise
}        \end{dcases}
        \]
Matrix $B^{(\ell)}$ indicates for each block of size $W+(1+\varepsilon)^\ell W$ in $S$ which starts at a location following an integer multiple of $W$, and for each heavy color,  whether the heavy color appears in the block or not.

Let $E^{(\ell)} = A \cdot B^{(\ell)} = \{e^{(\ell)}_{i,j}\}$, where the product is a Boolean matrix product.
Thus, $E^{(\ell)}$ roughly indicates for each pair of heavy colors whether their color distance is at most $W+(1+\varepsilon)^\ell W$.

In  Section~\ref{sec:E ell matrix properties} we state and prove lemmas regarding the connections between the matrices $E^{(\ell)}$ and approximating color distances.
In Section~\ref{sec:E* contruct algo} we describe the algorithm for constructing $E^*$, and prove its correctness based on the lemmas of Section~\ref{sec:E ell matrix properties}.

\subsection{Properties of The $E^{(\ell)}$ Matrices}\label{sec:E ell matrix properties}

The following lemma describes a connection between entries of $0$ in $E^{\ell}$ and lower bounds on color distances.

\begin{lemma} \label{lem:lower-bound}
    For $h_i ,h_j \in \mathcal{H}$ and integer $\ell_0 \leq \ell \leq \ell_{max}$,
    if $e^{(\ell)}_{i,j} = 0$ and $e^{(\ell)}_{j,i} = 0$, then $\delta_{h_i,h_j} > (1+\varepsilon)^{\ell} W$.
\end{lemma}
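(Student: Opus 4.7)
The plan is to prove the contrapositive by contradiction: assume $\delta_{h_i,h_j} \le (1+\varepsilon)^\ell W$, and deduce that at least one of $e^{(\ell)}_{i,j}$ and $e^{(\ell)}_{j,i}$ must be $1$. The key observation is that Boolean matrix multiplication $E^{(\ell)} = A\cdot B^{(\ell)}$ satisfies $e^{(\ell)}_{i,j}=\bigvee_k\bigl(a_{i,k}\wedge b^{(\ell)}_{k,j}\bigr)$, so producing a single block index $k$ that witnesses both factors suffices.

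Under the assumption there exist witnesses $p\in P(h_i)$ and $q\in P(h_j)$ with $|p-q|\le (1+\varepsilon)^\ell W$. I would split into two symmetric cases based on the order of $p$ and $q$ and argue that each case forces one of the two matrix entries to be $1$.

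First, assume $p\le q$. Let $k$ be the unique index with $(k-1)W+1\le p\le kW$; this exists since $\min(S)=1$ and $p\le\max(S)$, so $1\le k\le\lceil\max(S)/W\rceil$. Then $p\in P(h_i)\cap S[(k-1)W+1,kW]$, giving $a_{i,k}=1$. For $q$, we have $q\ge p\ge (k-1)W+1$ and $q\le p+(1+\varepsilon)^\ell W\le kW+(1+\varepsilon)^\ell W$, so $q\in P(h_j)\cap S[(k-1)W+1,kW+(1+\varepsilon)^\ell W]$, giving $b^{(\ell)}_{k,j}=1$. Hence $e^{(\ell)}_{i,j}\ge a_{i,k}\wedge b^{(\ell)}_{k,j}=1$, contradicting the hypothesis. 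The case $q<p$ is symmetric: picking $k$ with $(k-1)W+1\le q\le kW$ gives $a_{j,k}=1$ and $b^{(\ell)}_{k,i}=1$, so $e^{(\ell)}_{j,i}=1$, again a contradiction.

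There is no real obstacle here beyond bookkeeping: one just needs to verify that the witness $p$ (respectively $q$) lands in some valid block and that the partner point lands in the extended block of length $W+(1+\varepsilon)^\ell W$ anchored at the same multiple of $W$. The asymmetry between $A$ (block of length $W$) and $B^{(\ell)}$ (extended block) is precisely what necessitates checking both entries $e^{(\ell)}_{i,j}$ and $e^{(\ell)}_{j,i}$, since a single product only captures ordered pairs in which the $A$-witness is the left endpoint.
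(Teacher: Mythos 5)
Your proposal is correct and follows essentially the same argument as the paper: take the optimal pair $p,q$ realizing $\delta_{h_i,h_j}$, assume the distance is at most $(1+\varepsilon)^\ell W$, locate the left point's block index $k$ to force $a_{i,k}=1$ (or $a_{j,k}=1$ in the symmetric case), and observe the other point falls in the extended block, making $b^{(\ell)}_{k,j}=1$ and hence $e^{(\ell)}_{i,j}=1$ (or $e^{(\ell)}_{j,i}=1$), a contradiction. The bookkeeping and the explanation of why both matrix entries must be checked match the paper's reasoning.
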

\begin{proof}
    Let $p \in P(h_i)$ and $p' \in P(h_j)$ be chosen such that $\delta_{h_i,h_j} = d(p,p')$.
    Assume towards a contradiction that $\delta_{h_i,h_j} \leq (1+\varepsilon)^{\ell}W$.
    We focus on the case of $p' \geq p$ and so $\delta_{h_i,h_j} = p' - p$; the proof for the case $p > p'$ is symmetrical.
    Thus, there  exists  an integer $1\le k \leq \ceil{\frac{\max(S)}{W}}$ such that $p \in S[(k-1)W+1,kW]$ and so $a_{i,k} = 1$. Therefore,
    \begin{align*}
    p' = p + \delta_{h_i,h_j}
    \leq p + (1+\varepsilon)^{\ell}W \leq kW+  (1+\varepsilon)^{\ell}W.
    \end{align*}
     We conclude that $p' \in S[(k-1)W + 1, kW + (1+\varepsilon)^{\ell}W]$, implying that $b^{(\ell)}_{k,j} = 1$, and so $e^{(\ell)}_{i,j} =1$, which is a contradiction.
\end{proof}

The following lemma shows a connection between colors with distance greater than $(\frac{1+2\varepsilon}{\varepsilon})W$ and entries in $E^{(\ell_0)}$ whose value is 0.

\begin{lemma}\label{lem:lower-bound-base_case}
For $h_i,h_j \in \mathcal{H}$,
if $\delta_{h_i,h_j} > (\frac{1+2\varepsilon}{\varepsilon})W$ then $ e^{(\ell_{0})}_{i,j} = 0$ and  $e^{(\ell_{0})}_{j,i} = 0$.
\end{lemma}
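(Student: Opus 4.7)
The plan is to prove the contrapositive: if either $e^{(\ell_0)}_{i,j} = 1$ or $e^{(\ell_0)}_{j,i} = 1$, then $\delta_{h_i,h_j} \leq \frac{1+2\varepsilon}{\varepsilon}W$. I will handle the first case (the other follows by symmetry of the argument, swapping the roles of $h_i$ and $h_j$).

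First I would unpack the Boolean matrix product. Assuming $e^{(\ell_0)}_{i,j} = 1$, there exists an index $k$ with $a_{i,k}=1$ and $b^{(\ell_0)}_{k,j}=1$. By the definitions of $A$ and $B^{(\ell_0)}$, this yields points $p \in P(h_i) \cap S[(k-1)W+1,\, kW]$ and $p' \in P(h_j) \cap S[(k-1)W+1,\, kW + (1+\varepsilon)^{\ell_0} W]$. Both $p$ and $p'$ lie inside the interval $[(k-1)W+1,\, kW + (1+\varepsilon)^{\ell_0}W]$, whose length is $W + (1+\varepsilon)^{\ell_0} W$. Consequently $\delta_{h_i,h_j} \leq d(p,p') \leq W + (1+\varepsilon)^{\ell_0}W$.

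Next I would use the definition of $\ell_0$ to bound $(1+\varepsilon)^{\ell_0}$. Since $\ell_0 = \lfloor \log_{1+\varepsilon}(1/\varepsilon)\rfloor + 1 \leq \log_{1+\varepsilon}(1/\varepsilon) + 1$, we get $(1+\varepsilon)^{\ell_0} \leq (1+\varepsilon) \cdot \frac{1}{\varepsilon} = \frac{1+\varepsilon}{\varepsilon}$. Plugging this back yields
\[
W + (1+\varepsilon)^{\ell_0}W \;\leq\; W\left(1 + \frac{1+\varepsilon}{\varepsilon}\right) \;=\; \frac{1+2\varepsilon}{\varepsilon}W,
\]
so $\delta_{h_i,h_j} \leq \frac{1+2\varepsilon}{\varepsilon}W$, contradicting the hypothesis.

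For the case $e^{(\ell_0)}_{j,i}=1$, the same argument applies verbatim with the roles of $h_i$ and $h_j$ interchanged: there we obtain $p \in P(h_j)$ in a small block and $p' \in P(h_i)$ in the enlarged block, and the same length bound on the containing interval gives $\delta_{h_i,h_j} \leq \frac{1+2\varepsilon}{\varepsilon}W$. There is no real obstacle — the lemma is essentially a one-line algebraic check that the particular choice of $\ell_0$ makes $W(1+(1+\varepsilon)^{\ell_0})$ fit under the threshold $\frac{1+2\varepsilon}{\varepsilon}W$ — so the bulk of the writeup is simply correctly translating $e^{(\ell_0)}_{i,j}=1$ into the existence of the witnessing points $p$ and $p'$.
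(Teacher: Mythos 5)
Your proof is correct and follows essentially the same route as the paper's: extract a witness index $k$ from $e^{(\ell_0)}_{i,j}=1$, obtain points $p\in P(h_i)$ and $p'\in P(h_j)$ in the corresponding blocks, and use $\ell_0 \le \log_{1+\varepsilon}(1/\varepsilon)+1$ to bound the resulting distance by $\frac{1+2\varepsilon}{\varepsilon}W$ (the paper phrases it as a contradiction rather than a contrapositive, which is an immaterial difference). The algebra checks out, and the non-strict bound $\delta_{h_i,h_j}\le \frac{1+2\varepsilon}{\varepsilon}W$ is enough since the lemma's hypothesis is a strict inequality.
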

\begin{proof}
Assume by contradiction that $e^{(\ell_{0})}_{i,j} = 1$ or $e^{(\ell_{0})}_{j,i} = 1$.
We focus on the case of $e^{(\ell_{0})}_{i,j} = 1$; the proof for the case $e^{(\ell_{0})}_{j,i} = 1$ is symmetrical.
Since $e^{(\ell_{0})}_{i,j} = 1$ , there exists an integer $1\le k \leq \ceil{\frac{\max(S)}{W}}$ such that $a_{i,k} = 1$ and $b^{(\ell_{0})}_{k,j} = 1$.
Therefore, there exist points $p \in P(h_i)$ and $p' \in P(h_j)$ such that $p \in S[(k-1)W + 1, kW] $ and $p' \in S[(k-1)W +1,kW + (1+\varepsilon)^{(\ell_0)}]$.
Thus, \begin{align*}
    \delta_{h_i,h_j} &= \min_{\hat{p}\in P(c), \Tilde{p}\in P(c')}\{d(\hat{p},\Tilde{p})\}
    \leq d(p,p')
    =\max\left(p - p', p' - p \right) \\
    &=\max\left(W - 1, (1+\varepsilon)^{(\ell_{0})} W+W -1  \right)
    = (1+\varepsilon)^{(\ell_{0})}W + W - 1\\
    &= (1+\varepsilon)^{\floor{\log_{1+\varepsilon}(\frac{1}{\varepsilon})} + 1}W + W - 1 \leq ((1+\varepsilon)^{\log_{1+\varepsilon}(\frac{1}{\varepsilon}) + 1} + 1)W  -1 \\
    &< (\frac{1+2\varepsilon}{\varepsilon})W.
\end{align*}
\end{proof}

The following lemma describes a connection between entries of $1$ in $E^{\ell}$ and upper bounds on color distances.

\begin{lemma} \label{lem:upper-bound}
    For $h_i,h_j \in \mathcal{H}$ and integer $\ell_{0} \leq \ell \leq \ell_{\max}$, if $e^{(\ell)}_{i,j} = 1$, then $\delta_{h_i,h_j} < (1+\varepsilon)^{\ell+1} W$.
\end{lemma}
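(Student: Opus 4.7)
The plan is to unpack the definition of the Boolean matrix product $E^{(\ell)} = A \cdot B^{(\ell)}$ and then turn the resulting witnesses into an explicit pair of points.

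First, I would use that $e^{(\ell)}_{i,j} = 1$ means there exists some index $1 \le k \le \lceil \max(S)/W \rceil$ with $a_{i,k} = 1$ and $b^{(\ell)}_{k,j} = 1$. By the definitions of $A$ and $B^{(\ell)}$, this yields a point $p \in P(h_i)$ with $p \in S[(k-1)W+1, kW]$ and a point $p' \in P(h_j)$ with $p' \in S[(k-1)W+1, kW + (1+\varepsilon)^{\ell} W]$. Since both points lie in the same enlarged window, we have $\delta_{h_i,h_j} \le d(p,p') \le kW + (1+\varepsilon)^{\ell}W - ((k-1)W+1) = W + (1+\varepsilon)^{\ell}W - 1$.

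Second, I would prove the arithmetic inequality $W + (1+\varepsilon)^{\ell} W < (1+\varepsilon)^{\ell+1} W$, which simplifies to $(1+\varepsilon)^{\ell} \ge 1/\varepsilon$. This is exactly why the base index $\ell_0$ was chosen: from $\ell \ge \ell_0 = \lfloor \log_{1+\varepsilon}(1/\varepsilon) \rfloor + 1 \ge \log_{1+\varepsilon}(1/\varepsilon)$, we get $(1+\varepsilon)^{\ell} \ge (1+\varepsilon)^{\ell_0} \ge 1/\varepsilon$. Combining this with the previous bound gives $\delta_{h_i,h_j} < (1+\varepsilon)^{\ell+1} W$, which is the claim.

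There is no real obstacle here; the lemma is dual to \Cref{lem:lower-bound}, the main point being that the extra slack of $W$ in the definition of $B^{(\ell)}$ is absorbed by one extra factor of $(1+\varepsilon)$ thanks to the floor-plus-one in $\ell_0$. The only subtlety worth being careful about is that the bound $d(p,p') \le W + (1+\varepsilon)^{\ell}W - 1$ uses the strict inclusion of $p$ and $p'$ in the same window and covers both orderings $p \le p'$ and $p' \le p$ at once, so no case split is needed.
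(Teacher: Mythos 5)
Your proposal is correct and follows essentially the same route as the paper's proof: extract a witness index $k$ from the Boolean product, obtain points $p\in P(h_i)$ and $p'\in P(h_j)$ in the windows defining $A$ and $B^{(\ell)}$, bound $\delta_{h_i,h_j}$ by the window width $W+(1+\varepsilon)^{\ell}W-1$, and use $\ell\ge\ell_0$ to absorb the extra $W$ into one factor of $(1+\varepsilon)$. The only cosmetic difference is that you bound both orderings via the common enclosing interval while the paper takes a max over the two cases, and your "$(1+\varepsilon)^{\ell}\ge 1/\varepsilon$" is stated with $\ge$ where the floor-plus-one actually gives the strict inequality the paper uses; either way the $-1$ slack keeps the final bound strict.
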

\begin{proof}
If $e^{(\ell)}_{i,j} = 1$, then there  exists an integer  $1\le k \leq \ceil{\frac{\max(S)}{W}}$  such that $a_{i,k} = 1$ and $b^{(\ell)}_{k,j} = 1$.
Hence, there exist $p \in P(h_i)$ and $p' \in P(h_j)$ such that
$p \in S[(k-1)W + 1,kW]$ and $p' \in S[(k-1)W + 1,kW+ (1+\varepsilon)^{\ell}W]$.
Thus,
\begin{align*}
\delta_{h_i,h_j} &\le \abs{p - p'}
= \max(p - p', p' - p)
 \leq  \max\left(W - 1, (1+\varepsilon)^{\ell}W + W - 1\right)\\
 &= (1+\varepsilon)^{\ell}W + W - 1
 < ((1+\varepsilon)^{\ell}+1)W.
\end{align*}
Notice that $\floor{\log_{1+\varepsilon}(\frac{1}{\varepsilon})} +1 > \log_{1+\varepsilon}(\frac{1}{\varepsilon})$.
Finally, since $\floor{\log_{1+\varepsilon}(\frac{1}{\varepsilon})} + 1 = \ell_0\leq \ell$, we have
\begin{align*}
(1+\varepsilon)^{\ell} &\geq (1+\varepsilon)^{\floor{\log_{1+\varepsilon}(\frac{1}{\varepsilon})}+1}\
 > (1+\varepsilon)^{\log_{1+\varepsilon}(\frac{1}{\varepsilon})}
 = \frac{1}{\varepsilon},
\end{align*}
and so $1 < \varepsilon (1+\varepsilon)^{\ell}$.
Thus,
\begin{align*}
    ((1+\varepsilon)^{\ell}+1)W  < ((1+\varepsilon)^{\ell}+\varepsilon (1+\varepsilon)^{\ell})W
    = ((1+\varepsilon)^{\ell+1})W,
\end{align*}
completing the proof.
\end{proof}

The following lemma shows that $e^{(\ell)}_{i,j}$ is weakly monotone increasing with respect to $\ell$.

\begin{lemma} \label{lem:consistency}
  For $h_i,h_j \in \mathcal{H}$, and $\ell_0 \leq \ell \leq \ell_{max}$, if $e^{(\ell)}_{i,j} = 1$, then for any $\ell <\ell' \leq \ell_{max}$ we have $e^{(\ell')}_{i,j} = 1$.
\end{lemma}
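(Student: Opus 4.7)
The plan is to exploit monotonicity of the block sizes used in the definition of $B^{(\ell)}$. The matrix $A$ does not depend on $\ell$, so the only thing that changes as $\ell$ grows is the extent of each row-block in $B^{(\ell)}$. Since $(1+\varepsilon)^{\ell'} W \ge (1+\varepsilon)^\ell W$ whenever $\ell' \ge \ell$, the window
\[
S[(k-1)W+1,\; kW + (1+\varepsilon)^{\ell'} W]
\]
contains the window $S[(k-1)W+1,\, kW + (1+\varepsilon)^{\ell} W]$. Thus, if $P(h_j)$ intersects the smaller window it also intersects the larger one, giving $b^{(\ell)}_{k,j} \le b^{(\ell')}_{k,j}$ entrywise.

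I would then translate this pointwise monotonicity of $B^{(\ell)}$ into a statement about the Boolean product $E^{(\ell)} = A \cdot B^{(\ell)}$. Unfolding the definition of Boolean matrix product,
\[
e^{(\ell)}_{i,j} = \bigvee_{k=1}^{\lceil \max(S)/W \rceil} \bigl(a_{i,k} \wedge b^{(\ell)}_{k,j}\bigr).
\]
So if $e^{(\ell)}_{i,j} = 1$, there is some index $k$ such that $a_{i,k} = 1$ and $b^{(\ell)}_{k,j} = 1$. By the monotonicity established above, $b^{(\ell')}_{k,j} = 1$ as well, and since $a_{i,k}$ does not depend on $\ell$, the same $k$ witnesses $e^{(\ell')}_{i,j} = 1$.

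There is no real obstacle here: the lemma is an immediate consequence of the fact that the only $\ell$-dependent quantity is the length of the second block, which only grows with $\ell$, combined with the fact that Boolean matrix multiplication is monotone in each argument. I would spend one or two lines formalizing the monotonicity of the windows and one line invoking it inside the definition of the Boolean product.
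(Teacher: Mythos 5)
Your argument is correct and is essentially the paper's own proof: both take the witness index $k$ with $a_{i,k}=1$ and $b^{(\ell)}_{k,j}=1$, observe that the window $S[(k-1)W+1,\,kW+(1+\varepsilon)^{\ell}W]$ is contained in the corresponding window for $\ell'$, and conclude $b^{(\ell')}_{k,j}=1$ and hence $e^{(\ell')}_{i,j}=1$. No gaps; nothing further is needed.
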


\begin{proof}
If $e^{(\ell)}_{i,j} = 1$, then there exists an integer $1\leq k \leq \ceil{\frac{\max(S)}{W}}$ such that $a_{i,k} = 1$ and $b^{(\ell)}_{k,j} = 1$.
Thus, there exists $p \in P(h_i)$ such that
$p \in S[(k-1)W+1,kW + (1+\varepsilon)^{\ell} W]$.
Moreover, since $\ell < \ell'$, we have  $S[(k-1)W+1,kW+ (1+\varepsilon)^{\ell}W] \subseteq  S[(k-1)W+1,kW+ (1+\varepsilon)^{\ell'} W]$, and so $b^{(\ell')}_{k,j} =1$.
Finally, since $a_{i,k} = 1$, we conclude that $e^{(\ell')}_{i,j} = 1$.
\end{proof}

The following lemma shows that in $E^{(\ell_{max})}$, for each pair of heavy colors, there exists at least one corresponding 1 entry.

\begin{lemma} \label{lem:final_marix_M}
 For $h_i,h_j \in \mathcal{H}$, either $e_{i,j}^{(\ell_{max})} = 1$ or $e_{j,i}^{(\ell_{max})} = 1$.
\end{lemma}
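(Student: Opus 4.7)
The plan is to exhibit, for any pair of heavy colors $h_i, h_j$, an explicit block index $k$ witnessing a nonzero entry in the Boolean matrix product $A \cdot B^{(\ell_{max})}$. I would first pick arbitrary representatives $p \in P(h_i)$ and $p' \in P(h_j)$, which exist since heavy colors have nonempty point sets. Without loss of generality, assume $p \le p'$; the other case yields a symmetric argument showing $e^{(\ell_{max})}_{j,i} = 1$ instead of $e^{(\ell_{max})}_{i,j} = 1$.

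Next I would identify the unique integer $1 \le k \le \lceil \max(S)/W \rceil$ such that $p \in S[(k-1)W + 1, kW]$, which by definition gives $a_{i,k} = 1$. The key step is then to verify that $p'$ lies in the extended block $S[(k-1)W+1, kW + (1+\varepsilon)^{\ell_{max}} W]$. The lower bound is immediate from $p' \ge p \ge (k-1)W+1$. For the upper bound, by the definition $\ell_{max} = \lceil \log_{1+\varepsilon}(\max(S)) \rceil$, we have $(1+\varepsilon)^{\ell_{max}} \ge \max(S)$, so
\[
p' \le \max(S) \le (1+\varepsilon)^{\ell_{max}} \le kW + (1+\varepsilon)^{\ell_{max}} W,
\]
since $W \ge 1$ and $kW \ge 0$. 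Hence $b^{(\ell_{max})}_{k,j} = 1$, and combining with $a_{i,k} = 1$ in the Boolean product $E^{(\ell_{max})} = A \cdot B^{(\ell_{max})}$ yields $e^{(\ell_{max})}_{i,j} = 1$.

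There is no real obstacle here; the lemma is essentially a sanity check that the largest scale $\ell_{max}$ is truly large enough to cover the entire array, and the argument is a routine unpacking of the definitions of $A$, $B^{(\ell_{max})}$, and $\ell_{max}$. The only mild subtlety is ensuring that one of the two asymmetric directions $(i,j)$ or $(j,i)$ is guaranteed to witness the product, which is handled cleanly by the case split on whether $p \le p'$ or $p > p'$.
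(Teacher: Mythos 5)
Your proof is correct, but it takes a different route from the paper. The paper argues by contradiction: assuming $e_{i,j}^{(\ell_{max})}=e_{j,i}^{(\ell_{max})}=0$, it invokes Lemma~\ref{lem:lower-bound} to conclude $\delta_{h_i,h_j}>(1+\varepsilon)^{\ell_{max}}W\ge \max(S)$, contradicting $\delta_{h_i,h_j}\le \max(S)$. You instead give a direct, self-contained witness construction: take arbitrary $p\in P(h_i)$, $p'\in P(h_j)$ with (WLOG) $p\le p'$, locate the block $k$ containing $p$ so that $a_{i,k}=1$, and check from the definition of $\ell_{max}=\ceil{\log_{1+\varepsilon}\max(S)}$ and $W\ge 1$ that $p'\le \max(S)\le (1+\varepsilon)^{\ell_{max}}W\le kW+(1+\varepsilon)^{\ell_{max}}W$, hence $b^{(\ell_{max})}_{k,j}=1$ and $e^{(\ell_{max})}_{i,j}=1$; the case $p>p'$ symmetrically gives $e^{(\ell_{max})}_{j,i}=1$, exactly matching the disjunction in the statement. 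What your approach buys: it does not use Lemma~\ref{lem:lower-bound} at all, it does not need the minimizing pair realizing $\delta_{h_i,h_j}$ (any representatives suffice), and it works verbatim with the stated definition of $\ell_{max}$ (the paper's own proof silently rewrites $\ell_{max}$ as $\ceil{\log_{1+\varepsilon}(\max(S)/W)}$, which also works only because $W\ge 1$). What the paper's route buys is brevity given that Lemma~\ref{lem:lower-bound} is already available. One cosmetic note: your displayed chain should compare $\max(S)$ to $(1+\varepsilon)^{\ell_{max}}W$ rather than to $(1+\varepsilon)^{\ell_{max}}$ alone, but since you explicitly invoke $W\ge 1$ the inequality you need is justified.
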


\begin{proof}
Assume by contradiction that $e_{i,j}^{(\ell_{max})} = 0$ and $e_{j,i}^{(\ell_{max})} = 0$.
Let $p \in P(h_i)$ and $p' \in P(h_j)$ be chosen such that $\delta_{h_i,h_j} = d(p,p')$.
Hence by Lemma~\ref{lem:lower-bound}, we have $\delta_{h_i,h_j} >  (1+\varepsilon)^{\ell_{max}}W = (1+\varepsilon)^{\ceil{\log_{1+\varepsilon}{\frac{\max(S)}{W}}}}W$.
Notice that $\max{(S)} \geq \delta_{h_i,h_j}$.
Therefore,
\begin{align*}
 \max{(S)}  &\geq \delta_{h_i,h_j}
 >(1+\varepsilon)^{(\ceil{\log_{1+\varepsilon}{\frac{\max{(S)} }{W}}})}W \geq (1+\varepsilon)^{\log_{1+\varepsilon}{\frac{\max{(S)} }{W}}}W
\geq  \max{(S)},
\end{align*}
which is a contradiction.
\end{proof}

In the following lemma we show the main property of the matrices $E^{(\ell)}$ used in the algorithm of Section~\ref{sec:E* contruct algo} when dealing with pairs of heavy colors with color distance greater than $\frac{1 + 2\varepsilon}{\varepsilon}W$.
\begin{lemma} \label{lem: uniqueness of a lower bound and upper bound}
    For $h_i,h_j \in \mathcal{H}$ , if $\delta_{h_i,h_j} > \frac{1+2\varepsilon}{\varepsilon}W$,  then there exists a unique integer $\ell_{0} \leq \ell \leq \ell_{max} - 1$ such that $e_{i,j}^{(\ell)} = e_{j,i}^{(\ell)} = 0$ and either  $e_{i,j}^{(\ell+1)} = 1$
    or $e_{j,i}^{(\ell+1)} = 1$.
\end{lemma}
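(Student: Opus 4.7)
The plan is to combine the four preceding lemmas about the matrices $E^{(\ell)}$ in a straightforward monotonicity argument. Define the indicator $g(\ell) = e^{(\ell)}_{i,j} \vee e^{(\ell)}_{j,i}$ for $\ell_0 \leq \ell \leq \ell_{max}$. The claim to be proved is then equivalent to: there is a unique $\ell \in [\ell_0, \ell_{max} - 1]$ with $g(\ell) = 0$ and $g(\ell+1) = 1$.

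For existence, I would first apply Lemma~\ref{lem:lower-bound-base_case} to the hypothesis $\delta_{h_i,h_j} > \tfrac{1+2\varepsilon}{\varepsilon}W$ to conclude $g(\ell_0) = 0$, and then apply Lemma~\ref{lem:final_marix_M} to get $g(\ell_{max}) = 1$. Hence the set $\{\ell : \ell_0 < \ell \leq \ell_{max},\ g(\ell) = 1\}$ is nonempty; let $\ell^*$ be its minimum. Then $\ell^* - 1 \geq \ell_0$, $\ell^* - 1 \leq \ell_{max} - 1$, $g(\ell^*-1) = 0$ (by minimality of $\ell^*$, unless $\ell^* - 1 = \ell_0$, in which case we already have $g(\ell_0) = 0$), and $g(\ell^*) = 1$, so the choice $\ell = \ell^* - 1$ witnesses existence.

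For uniqueness, I would argue by contradiction: suppose $\ell_1 < \ell_2$ are two indices in $[\ell_0, \ell_{max}-1]$ both satisfying the required condition. Then $g(\ell_1 + 1) = 1$, so in particular at least one of $e^{(\ell_1+1)}_{i,j}$ or $e^{(\ell_1+1)}_{j,i}$ equals $1$. Applying Lemma~\ref{lem:consistency} (the weak monotonicity of $e^{(\ell)}$ in $\ell$) to that entry, and using $\ell_1 + 1 \leq \ell_2$, the same entry remains $1$ at level $\ell_2$, forcing $g(\ell_2) = 1$ and contradicting $g(\ell_2) = e^{(\ell_2)}_{i,j} = e^{(\ell_2)}_{j,i} = 0$.

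The argument is essentially a discrete intermediate value theorem plus monotonicity, so I do not anticipate any real obstacle; the only point requiring mild care is verifying that the boundary indices $\ell_0$ and $\ell_{max}$ behave as claimed, which is exactly what Lemmas~\ref{lem:lower-bound-base_case} and~\ref{lem:final_marix_M} deliver, and that the minimum $\ell^*$ produced for existence lies in the correct interval.
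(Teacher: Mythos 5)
Your proof is correct and takes essentially the same route as the paper's: Lemma~\ref{lem:lower-bound-base_case} pins down level $\ell_0$, Lemma~\ref{lem:final_marix_M} pins down level $\ell_{max}$, and the monotonicity of Lemma~\ref{lem:consistency} yields both the transition point and its uniqueness. If anything, packaging the two entries into the single indicator $g(\ell)=e^{(\ell)}_{i,j}\vee e^{(\ell)}_{j,i}$ and taking the first level at which $g$ equals $1$ is a slightly cleaner way to ensure that $e^{(\ell)}_{j,i}=0$ also holds at the chosen $\ell$, a detail the paper's existence step treats more tersely via its without-loss-of-generality choice.
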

\begin{proof}
    By Lemma~\ref{lem:final_marix_M}, at least one of $e_{i,j}^{(\ell_{max})}$ and $ e_{j,i}^{(\ell_{max})}$ must  be $1$.
    So assume without loss of generality that $e_{i,j}^{(\ell_{max} )} = 1$.
    Since $\delta_{h_i,h_j} > \frac{1+2\varepsilon}{\varepsilon}W$, by Lemma~\ref{lem:lower-bound-base_case} $e_{i,j}^{(\ell_{0} )} = 0$.
    Thus, by Lemma~\ref{lem:consistency}, there must exist exactly one  $\ell_0\le \ell < \ell_{max}$ such that $e_{i,j}^{(\ell)} = 0$ and $e_{i,j}^{(\ell+1)} = 1$.

    To complete the proof we show uniqueness.
    If $e_{i,j}^{(\ell_{max})} = e_{j,i}^{(\ell_{max})} = 1$, then
    assume towards a contradiction that there exist two different integers $\ell'\neq \hat \ell$  such that
     $e_{i,j}^{(\ell')} = 0$, $e_{j,i}^{(\ell')} = 0$,   $e_{i,j}^{(\ell'+1)} = 1$,   $e_{i,j}^{(\hat{\ell})} = 0$,  $e_{j,i}^{(\hat{\ell})} = 0$, and  $e_{j,i}^{(\hat{\ell}+1)} = 1$.
    Without loss of generality,  $\ell' < \hat{\ell} $, and so  by Lemma~\ref{lem:consistency}, $e_{i,j}^{(\ell'+1)} = 1$ and $\ell'<\hat \ell$ implies $e^{(\hat{\ell} )}_{i,j} = 1$ which is a contradiction.

Otherwise, if at least one of $e_{i,j}^{(\ell_{max})}$ or $ e_{j,i}^{(\ell_{max})}$ is $0$, then since we assumed without loss of generality that $e_{i,j}^{(\ell_{max} )} = 1$, we have $e_{j,i}^{(\ell_{max})}=0$, which together with Lemma~\ref{lem:consistency} implies that for all  $\ell_0\le \ell' < \ell_{max}$,  $e_{j,i}^{(\ell')}=0$.
Thus, $\ell$ is unique.
\end{proof}

\subsection{Algorithm for Computing $E^*$.}\label{sec:E* contruct algo}

\begin{algorithm}
\caption{\textsc{Construct}$E^*(S,\mathcal H,W,\varepsilon)$}
\label{alg:building-E-star}
\begin{algorithmic}[1]
\State $E^* \leftarrow \infty$
\State $\ell_0 \leftarrow \floor{\log_{1+\varepsilon}(\frac{1}{\varepsilon})} + 1 $;  $\ell_{max} \leftarrow \ceil{\log_{1+\varepsilon}({\max{(S)}})}$
\For {$p \in S$ and $\hat{p} \in S[p, p + \frac{1+2\varepsilon}{\varepsilon}W]$} \label{line:start_compute_small_distance}
        \If{$C_p = h_i$ and $C_{\hat{p}} = h_i$ and $e^{*}_{i,j} > \hat{p} - p$}
            \State $e^*_{i,j} \leftarrow \hat{p} - p$; $e^*_{j,i} \leftarrow \hat{p} - p$
        \EndIf
\EndFor\label{line:end_compute_small_distance}
\State Construct matrix $A$\label{line:begin_matrix_comp}
\For{$\ell \leftarrow \ell_{0} $ to $\ell_{max}$}
\State Construct matrix $B^{(\ell)}$
    \State $E^{(\ell)} \leftarrow A \cdot B^{(\ell)}$ \label{Mult}
\EndFor    \label{line:end_matrix_comp}
\For{$\ell \leftarrow \ell_{0} $ to $\ell_{max}$}\label{line:begin_update}
    \For{$(h_i,h_j) \in \mathcal H \times \mathcal H$}
     \If{$e^{(\ell)}_{i,j} = e^{(\ell)}_{j,i}= 0$ and  ($e^{(\ell+1)}_{i,j} = 1$ or $e^{(\ell+1)}_{j,i} = 1$)}
            \State $e^*_{i,j} \leftarrow (1+\varepsilon)^{(\ell+2)}W$\label{line:assign eij}
            \EndIf
    \EndFor\label{line:end_update}
\EndFor

\end{algorithmic}
\end{algorithm}

In this section, we describe the algorithm for computing $E^*$, prove its correctness and analyze its time cost.
Pseudocode for the algorithm is given in Algorithm 1.
The algorithm begins by a brute-force exact computation of color distances for all heavy colors $h_i$ and $h_i$ where $\delta_{h_i,h_j} \le \frac{1+2\varepsilon}{\varepsilon}W$.
The brute-force computation costs $O((\frac{1+2\varepsilon}{\varepsilon})Wn) = O(\frac{Wn}{\varepsilon})$ time.

The rest of the algorithm deals with the case of $\delta_{h_i,h_j} > \frac{1+2\varepsilon}{\varepsilon}W$.
First, the algorithm computes the matrices $E^{(\ell)}$ from matrices $A$ and $B^{(\ell)}$.
The time cost of computing all $E^{(\ell)}$ is dominated by the cost of $\log_{1+\varepsilon}(\max{(S)}) =\Theta(\frac{ \log \max(S)} \varepsilon)$ matrix multiplications.
Each multiplication is between a matrix of size $\frac n \tau \times \frac {\max(S)} W$ and a matrix of size $ \frac {\max(S)} W \times \frac n \tau$.
It is folklore knowledge that the matrix multiplication of two matrices of size $x\times y$ and $y\times z$ can be computed in $O\left(\frac {x\cdot y \cdot z} {\min(x,y,z)^{3-\omega}}\right)$.
Thus, the time cost of computing the matrix multiplications is $O\left(
\frac{n^2\cdot \max{(S)}}{\tau^2W \cdot \min(\frac{n}{\tau}, \frac{\max{(S)}}{W})^{3 - \omega}}
\frac{ \log \max(S)} \varepsilon \right)$ time.

Finally, the algorithm utilizes the matrices $E^{(\ell)}$  to complete  entries in $E^*$ which correspond to heavy pairs of colors that were not covered by the brute-force computation.
To do so, for each such pair $(h_i,h_j) \in \mathcal H \times \mathcal H$, the algorithm scans all $e_{i,j}^{(\ell)}$ to find the unique $\ell$ from Lemma~\ref{lem: uniqueness of a lower bound and upper bound} such that $e^{(\ell)}_{i,j} = e^{(\ell)}_{j,i}= 0$ and  either $e^{(\ell+1)}_{i,j} = 1$ or $e^{(\ell+1)}_{j,i} = 1$, and  sets $e^*_{i,j}$  to be $(1+\varepsilon)^{\ell+2}W$.
Computing the entries in $E^*$ for all such pairs costs $O(|\mathcal H|^2\frac{ \log \max(s)} \varepsilon)) = O((\frac n \tau)^2\frac{ \log \max(s)} \varepsilon))$, which is dominated by the cost of the matrix multiplications performed during the computation of all of the $E^{(\ell)}$ matrices.

\begin{lemma}\label{lem:main lemma}
There exists an algorithm that computes $E^*$ in $ \Tilde{O}\left(\frac{Wn}{\varepsilon} +\frac{n^2\cdot \max{(S)}}{\tau^2W \cdot \min(\frac{n}{\tau}, \frac{\max{(S)}}{W})^{3 - \omega}}\right)$ time, such that  for $h_i ,h_j \in \mathcal{H}$ we have
      $\delta_{h_i,h_j} \leq e^*_{i,j} \leq   (1+\varepsilon)\delta_{h_i,h_j}$.

\end{lemma}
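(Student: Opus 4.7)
The plan is to verify the two claims in Lemma~\ref{lem:main lemma} separately, following the three blocks of Algorithm~\ref{alg:building-E-star}, namely the brute-force loop (lines~\ref{line:start_compute_small_distance}--\ref{line:end_compute_small_distance}), the matrix-product block (lines~\ref{line:begin_matrix_comp}--\ref{line:end_matrix_comp}), and the final update loop (lines~\ref{line:begin_update}--\ref{line:end_update}).  For correctness, I would split the argument according to whether $\delta_{h_i,h_j}\le \frac{1+2\varepsilon}{\varepsilon}W$ (the small-distance regime, handled by the first block) or $\delta_{h_i,h_j}>\frac{1+2\varepsilon}{\varepsilon}W$ (the large-distance regime, handled by the third block using the matrices $E^{(\ell)}$ built by the second block).

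In the small-distance regime, the brute-force loop enumerates every pair $(p,\hat p)$ with $\hat p-p\le \frac{1+2\varepsilon}{\varepsilon}W$, so after line~\ref{line:end_compute_small_distance} one has $e^*_{i,j}=\delta_{h_i,h_j}$ exactly.  A mild subtlety is that the subsequent update loop may still overwrite this entry if the $0{\to}1$ transition of $e^{(\ell)}_{i,j}$ happens at some $\ell\ge \ell_0$; by Lemma~\ref{lem:consistency} such a transition is unique, and by combining Lemma~\ref{lem:lower-bound} at level~$\ell$ with Lemma~\ref{lem:upper-bound} at level~$\ell+1$, the overwritten value $(1+\varepsilon)^{\ell+2}W$ still sandwiches $\delta_{h_i,h_j}$ between itself and $(1+\varepsilon)^{-2}$ times itself, so the exact value is merely replaced by a valid $(1+\varepsilon)^2$ approximation.

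In the large-distance regime, I would invoke Lemma~\ref{lem: uniqueness of a lower bound and upper bound} to obtain a unique $\ell_0\le \ell\le \ell_{\max}-1$ with $e^{(\ell)}_{i,j}=e^{(\ell)}_{j,i}=0$ and either $e^{(\ell+1)}_{i,j}=1$ or $e^{(\ell+1)}_{j,i}=1$, so that line~\ref{line:assign eij} assigns $e^*_{i,j}=(1+\varepsilon)^{\ell+2}W$ exactly once.  Lemma~\ref{lem:lower-bound} at level~$\ell$ gives $\delta_{h_i,h_j}>(1+\varepsilon)^{\ell}W$, and Lemma~\ref{lem:upper-bound} at level~$\ell+1$ gives $\delta_{h_i,h_j}<(1+\varepsilon)^{\ell+2}W=e^*_{i,j}$, whence $\delta_{h_i,h_j}\le e^*_{i,j}<(1+\varepsilon)^2\delta_{h_i,h_j}$.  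To sharpen the multiplicative error from $(1+\varepsilon)^2$ to $(1+\varepsilon)$ as stated, the algorithm is instantiated with the rescaled parameter $\varepsilon'=\sqrt{1+\varepsilon}-1=\Theta(\varepsilon)$; the resulting constant-factor blowup is absorbed by the $\tilde O(\cdot)$ in the runtime.

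For the runtime I would simply add the three contributions discussed immediately before the lemma statement.  Block~1 costs $O(Wn/\varepsilon)$ because each of the $n$ points spawns $O(W/\varepsilon)$ pair inspections.  Block~2 builds $A$ and each $B^{(\ell)}$ in time linear in their size and then performs $\Theta(\varepsilon^{-1}\log\max(S))$ rectangular Boolean products of an $|\mathcal H|\times \tfrac{\max(S)}{W}$ matrix with a $\tfrac{\max(S)}{W}\times |\mathcal H|$ matrix; plugging $|\mathcal H|\le n/\tau$ into the folklore bound $O\!\bigl(xyz/\min(x,y,z)^{3-\omega}\bigr)$ yields the second summand after swallowing polylogarithmic and $\varepsilon^{-1}$ factors into $\tilde O(\cdot)$.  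Block~3 is a plain scan of $\tilde O(|\mathcal H|^2)=\tilde O((n/\tau)^2)$ entries and is dominated by Block~2.  The principal obstacle I anticipate is purely bookkeeping: ensuring that the overwrite behavior for pairs straddling the threshold $\frac{1+2\varepsilon}{\varepsilon}W$ does not destroy the approximation guarantee, and verifying that the $\varepsilon'$ rescaling is consistent with the definitions of $\ell_0$ and $\ell_{\max}$ used in Section~\ref{sec:E ell matrix properties}; beyond these points the proof reduces to directly invoking the lemmas of that subsection and the rectangular FMM bound.
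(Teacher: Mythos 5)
Your proposal is correct and follows essentially the same route as the paper's proof: split on whether $\delta_{h_i,h_j}$ exceeds $\frac{1+2\varepsilon}{\varepsilon}W$, invoke Lemmas~\ref{lem:lower-bound}, \ref{lem:upper-bound} and~\ref{lem: uniqueness of a lower bound and upper bound} to sandwich the assigned value $(1+\varepsilon)^{\ell+2}W$ within a $(1+\varepsilon)^2$ factor, rescale $\varepsilon$ (the paper uses $\varepsilon'=\varepsilon/3$ rather than $\sqrt{1+\varepsilon}-1$, an immaterial difference), and sum the three runtime contributions exactly as in the discussion preceding the lemma. Your explicit check that a later overwrite of a brute-force-computed entry still yields a valid $(1+\varepsilon)^2$ approximation is a point the paper's proof passes over silently, and it is handled correctly.
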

\begin{proof}
The runtime of the algorithm follows from the discussion above.

If $\delta_{h_i,h_j} \leq \frac{1+2\varepsilon}{\varepsilon}W$, then there exist $p\in P(h_1)$ and $\hat p\in P(h_i)$ such that $\delta_{h_i,h_j} = \abs{\hat p-p} \leq \frac{1+2\varepsilon}{\varepsilon}W$,
and so, after the brute-force computation, we have $e^*_{i,j}= \delta_{h_i,h_j}$.

Otherwise,  $\delta_{h_i,h_j} > \frac{1+2\varepsilon}{\varepsilon}W$, and so the algorithm sets $e^*_{i,j}$  to be $(1+\varepsilon)^{\ell+2}W$, where $\ell$ is the unique integer from Lemma~\ref{lem: uniqueness of a lower bound and upper bound}, and in particular, $e_{i,j}^{(\ell)} = e_{j,i}^{(\ell)} = 0$ and either  $e_{i,j}^{(\ell+1)} = 1$
    or $e_{j,i}^{(\ell+1)} = 1$.
By Lemmas~\ref{lem:upper-bound} and~\ref{lem:lower-bound}, $
\delta_{h_i,h_j} < (1+\varepsilon)^{\ell + 2}W< (1+\varepsilon)^2\delta_{h_i,h_j} $.

Finally, to obtain a $(1+\varepsilon)$ approximation one can run the algorithm with approximation parameter $\varepsilon' = \frac{\varepsilon}{3}$, which does not affect the asymptotic time complexity.
\end{proof}

\section{Proof of Theorem~\ref{thm:ACDO}}\label{sec:1D-ACDO}
In this section, we analyze the combination of the construction algorithm for $E^*$ given in Section~\ref{sec:E* contruct algo} together with the generic \ACDO{} algorithm described in Section~\ref{sec:overview}.
However, we focus on the type of instances of $\ACDO{}$ which are used in Section~\ref{sec: HACDO by ACDO} for solving the approximate snippets problem.
In particular, our metric is defined by locations in an array of size $n$,  and so we have $\max(S)=n$.
Thus, in our case,
\begin{align*}
   T_{E^*}(\mathcal{H}) &= O\left(\frac{Wn}{\varepsilon} +\frac{n^{\omega}\max(\tau,W)^{3-\omega}}{\tau^2W }\cdot\frac{ \log n} \varepsilon \right).
\end{align*}
If $n^\frac{\omega -1}{\omega + 1}\log^\frac{1}{2} n\leq \tau\leq n$, then we choose $W =  (\frac{n} {\tau})^{\frac{\omega -1 }{2}}\log^\frac{1}{2} n$.
In such a case, we have
$W = (\frac{n} {\tau})^{\frac{\omega -1 }{2}} \log^{\frac{1}{2}}n
    \le \left(\frac{n}{n^\frac{\omega -1}{\omega + 1}}\right)^{\frac{\omega -1 }{2}}\log^\frac{1}{2} n
     = n^{\frac{\omega -1}{\omega + 1}}\log^\frac{1}{2} n
    \leq \tau$, and so
\begin{align*}
    T_{E^*}(\mathcal{H})&=O\left(\frac{(\frac{n}{\tau})^{\frac{\omega -1 }{2}}n}{\varepsilon}\log^{\frac{1}{2}}n +\frac{n^{\omega}\tau^{3-\omega}}{\tau^2(\frac{n}{\tau})^{\frac{\omega -1 }{2}} \cdot \log^{\frac{1}{2}}n}\cdot\frac{\log n} \varepsilon \right)
    =O\left(\frac{n^{\frac{\omega +1 }{2} }} {\varepsilon\tau^{\frac{\omega -1 }{2}}} \log^\frac{1}{2} n\right).
\end{align*}

If $1 \leq \tau \leq n^\frac{\omega -1}{\omega + 1}\log^\frac{1}{2} n$,  then we choose
$W = \frac{n}{\tau^{\frac{2}{\omega-1}}}\cdot \log^\frac{1}{\omega-1} n$.
In such a case we have
$W =
\frac{n}{\tau^{\frac{2}{\omega-1}}}\log^{\frac{1}{\omega - 1 }} n
\geq
\frac{n}{n^{\frac{\omega -1}{\omega +1}\cdot{\frac{2}{\omega - 1}}}}\log^\frac{1}{\omega-1} n
= n^\frac{\omega - 1 }{\omega+ 1 }\log ^\frac{1}{\omega-1} n
\geq \tau$,
and so
\begin{align*}
    T_{E^*}(\mathcal{H})&=O\left(\frac{n^2}{\varepsilon\tau^{\frac{2}{\omega-1}}}\log^\frac{1}{\omega-1} n +\frac{n^{\omega}(\frac{n}{\tau^\frac{2}{\omega - 1}})^{2-\omega} \cdot \log(n)^{\frac{2 - \omega}{\omega - 1 }}}{\tau^2} \cdot\frac{ \log n} \varepsilon \right)
    =O\left(\frac{n^{2}}{\varepsilon\tau^{\frac{2}{\omega -1 }}} \log^\frac{1}{\omega-1}n \right).
\end{align*}
Thus, to summarize we have shown
\[T_{E^*}(\mathcal H) = \begin{dcases}
      \Tilde{O}\left(\frac{n^{2}}{\varepsilon\tau^{\frac{2}{\omega -1 }}}\right) &\text{for } 1 \leq  \tau \leq n^\frac{\omega -1}{\omega + 1}\\
      \Tilde{O}\left(\frac{n^{\frac{\omega +1 }{2} }}{\varepsilon\tau^{\frac{\omega -1 }{2}}}\right) &\text{for } n^\frac{\omega -1}{\omega + 1} \leq \tau \leq n.
    \end{dcases}\]
Notice that in either case, $T_{E*}(\mathcal{H}) = \Omega(\max(n^2/\tau^2),n)$.

For the \NNS{} data structure we use the van Emde Boas~\cite{Van75}  data structure which  for a set of $m$ points from integer universe $\{1,2,\ldots,u\}$ has a preprocessing cost of $O({m}\cdot \log\log{u})$ and query time $O(\log\log{u})$.
In our setting, $u=n$, and so $T_{p,\NNS}(P(c)) = O(|P(c)| \log \log n)$ and $T_{q,\NNS}(n) = O(\log \log n)$.

Thus, the construction time of the \ACDO{} algorithm is
\begin{align*}
    O(T_{E^*}(\mathcal{H})+\sum_{c\in \mathcal C} T_{p,\text{\NNS{}}}(P(c))) & =     O(T_{E^*}(\mathcal{H})+\sum_{c\in \mathcal C}  |P(c)| \log \log n)\\
    &=O(T_{E^*}(\mathcal{H})+n \log \log n)\\
    & = \tilde O(T_{E^*}),
\end{align*}
and the query time is $O(\tau \cdot T_{q,{\text{\NNS}}}(n)) = O(\tau \log \log n) = \tilde O(\tau)$.
To complete the proof of Theorem~\ref{thm:ACDO}, we plug $\tau = \tilde O(n^b)$ into $n^a = T_{E^*}(\mathcal{H})$, to obtain the following tradeoff curve for a fixed $\varepsilon$.
\[\begin{dcases}
        a +  \frac{2}{\omega - 1}b = 2 &\text{if } 0 \leq  b \leq \frac{\omega -1}{\omega + 1} \\
        \frac{2}{\omega - 1}a + b = {\frac{\omega+1}{\omega-1}}  &\text{if } \frac{\omega -1}{\omega + 1} \leq b \leq 1.
    \end{dcases}\]

\begin{remark}\label{rmk:constant time queries}
One feature of  Theorem~\ref{thm:ACDO} algorithm, which is used in the proof of Theorem~\ref{thm: HACDO by ACDO}, is that the query cost for \ACDO{} queries when both $C$ and $C'$ are heavy is $O(1)$ time since all the algorithm does is looking up the answer in $E^*$.
\end{remark}

\section{Proof of Theorem~\ref{thm: HACDO by ACDO}}\label{sec: HACDO by ACDO}
In this section we present an algorithm which solves the \AMCDOCH{} problem.
The algorithm is inspired by the exact \MCDOCH{} algorithm of~\cite{KK16}, combined with a new observation which allows to leverage Theorem~\ref{thm:ACDO}.

Assume without loss of generality that $T_S$ (the tree representing the color hierarchy) is an ordinal tree.
Moreover, by a reduction given in~\cite{KK16}, we may assume that there is a bijection embedding from the points in $S$ to leaves of $T_S$.
We abuse notation and say that a point $p\in S$ is \emph{at the leaf} corresponding to $p$ in $T_S$.
Let $A$ be an array containing the set of points in $S$, but with the order defined by the order in which points in $S$ are encountered during a pre-order traversal of $T_S$.
Constructing $A$ costs linear time in the size of $T_S$.
As observed in~\cite{KK16}, for each $c \in \mathcal C$, there exist integers $1\le x_c \le y_c \le n$ such that $P(c)= A[x_c,y_c]$.

The algorithm preprocesses $A$ into an \RNNS{}~\cite{KL04,KKL07,CIT12,NN12,BGKS14,KKFL14,BPS16,MNV16}
data structure.
Next, the algorithm partitions $A$ into blocks of size $\tau$, for an integer parameter $1\le \tau \le n$ to be determined later, as follows.
For integer $1\le i \le \ceil{\frac n \tau}-1$  let $I_i = A[(i-1)\tau+1, i\tau]$.
Let $I_{\ceil{\frac n \tau}} = A[(\frac n \tau-1)\tau+1),n]\}$.
Let $\mathcal{I} = \{I_1, I_2 \ldots I_{\ceil{\frac{n}{\tau}}}\}$.
Notice that each $I_i$ contains $\tau$ points from $S$ (except for possibly $I_{\ceil{\frac{n}{\tau}}}$ which contains at most $\tau$ points from $S$), but their order depends on their locations in $A$.
For a color $c\in \mathcal C$, if $y_c - x_c \ge 2\tau$ then the subarray $A[x_c,y_c]$ can be partitioned into three parts: (1) the prefix $J_c = A[x_c,\ceil{\frac{x_c -1 }{\tau}}\tau]$, (2) the middle part which is the union of blocks $I_c = \bigcup_{\ceil{\frac{x_c -1 }{\tau}} +1 \le t \le \floor{\frac{y_c }{\tau}}} I_t = A[\ceil{\frac{x_c -1 }{\tau}} \tau +1, \floor{\frac{y_c }{\tau}}\tau] $, and (3) the suffix $K_c = A[(\floor{\frac{y_c }{\tau}}\tau + 1,y_c]$.
Notice that the sizes of $J_c$ and $K_c$ are less than $\tau$, and $J_c\cup I_c\cup K_c = A[x_c,y_c]$.

Following~\cite{KK16}, the algorithm constructs a matrix $B=\{b_{i,j}\}$ of size $\ceil{\frac{n}{\tau}} \times \ceil{\frac{n}{\tau}}$, such that $b_{i,j} = d(I_i,I_j)$.
To construct $B$, the algorithm of~\cite{KK16} performs $\tau$ \RNNS{} queries for the computation of each entry in $B$.
Computing ${B}$ turns out to be the dominating component in the preprocessing runtime of~\cite{KK16}.
\paragraph*{Approximating matrix $B$.}
To obtain a faster preprocessing time for the approximate setting, instead of constructing $B$, our algorithm constructs an \emph{approximate} matrix $\hat B = \{\hat b _{i,j}\}$ where $d(I_i,I_j) \le \hat b_{i,j} \le (1+\varepsilon) d(I_i,I_j)$.
The construction of $\hat B$ utilizes the algorithm of Theorem~\ref{thm:ACDO} as follows.
Define a new coloring set $\hat{\mathcal{C}} = \{ \hat{c_1},\hat{c_2}, \ldots \hat{C_{\abs{\mathcal{I}}}} \}$ over the points in $S$,
such that for each $\hat{c_i} \in \hat{\mathcal{C}}$ we have $P(\hat{c_i}) = I_i$.
Notice that for every $i\ne j$, we have $I_i \bigcap I_j  = \emptyset{}$, and so using the colors of $\hat{\mathcal C}$ on $S$, every point in $S$ has only one color.
Thus, the algorithm uses the algorithm of Theorem~\ref{thm:ACDO} on $S$, but with the colors of $\hat{\mathcal C}$, and the query time designed to be $n^b = \tau$.
Now,  $\hat b_{i,j} $ is set to be the answer of the \ACDO{} query on colors $\hat c_i$ and $\hat c_j$, and so $d(I_i,I_j)  = \delta_{\hat c_i, \hat c_j} \le \hat b_{i,j} \le (1+\varepsilon) \delta_{\hat c_i, \hat c_j} = (1+\varepsilon) d(I_i,I_j)$.

In the last step of the preprocessing phase, the algorithm preprocesses $\hat B$ using a $2D$ Range
Minimum Query (\DRMQ{}) data structure~\cite{AFL07} so that given a rectangle in $\hat B$, defined by its corners, the algorithm  returns in $O(1)$ time the smallest value entry in the rectangle.

\paragraph*{Answering queries}
In order to answer an \AMCDOCH{} query between $C$ and $c'$ taken from color set $\mathcal C$, if either $P(c)\subseteq P(c')$ or $P(c')\subseteq P(c)$, which can be tested in $O(1)$ time using $x_c,y_c,x_{c'},$ and $y_{c'}$, then $\delta_{c,c'} = 0$ and the algorithm returns $0$.
Thus, for the rest of the query process we assume that $P(c)$ and $P(c')$ are disjoint.

If $y_c - x_c < 2\tau$, then the algorithm returns $
\min_{p\in A[x_c, y_c]}\{d(p,A[x_{c'},y_{c'}])\} = \delta_{c,c'}$ by applying $|P(c)|$ \RNNS{} queries.
Similarly, if $y_{c'} - x_{c'} < 2\tau$, then the algorithm returns $
\min_{p'\in A[x_{c'}, y_{c'}]}\{d(A[x_{c},y_{c}],p'])\} = \delta_{c,c'}$ by applying $|P(c')|$ \RNNS{} queries.

Otherwise, $y_c - x_c \ge 2\tau$ and $y_{c'} - x_{c'} \ge 2\tau$, and so the algorithm can use the partitions of $A[x_c,y_c] = J_c\cup I_c\cup K_c$ and  $A[x_{c'},y_{c'}] = J_{c'}\cup I_{c'}\cup K_{c'}$.
Notice that
\begin{align*}
    \delta_{c,c'} &= d(A[x_c,y_c],A[x_{c'},y_{c'}])\\
    &=\min \left( d(J_c\cup K_c, A[x_{c'},y_{c'}]),d(A[x_{c},y_{c}], J_{c'}\cup K_{c'}), d(I_c, I_{c'})  \right)
\end{align*}

To compute $d(J_c\cup K_c,A[x_{c'},y_{c'}])$, the algorithm executes an \RNNS{} query between each $p\in  J_c\cup K_c$ and $A[x_{c'},y_{c'}]$,  and sets  $\alpha_{c}(c') = \min_{p\in J_c\cup K_c}\{d(p,A[x_{c'},y_{c'}])\} = d(J_c\cup K_c,A[x_{c'},y_{c'}])$.
Similarly, the algorithm sets $\alpha_{c'}(c) = \min_{p\in J_{c'}\cup K_{c'}}\{d(p,A[x_{c},y_{c}])\} = d(A[x_{c},y_{c}],J_{c'}\cup K_{c'})$.
Thus, computing $\alpha_c(c')$ and $\alpha_{c'}(c)$ costs  $O(\tau)$ \RNNS{} queries.

Instead of computing $d(I_c,I_{c'})$ exactly,
the algorithm utilizes $\hat B$ to compute a $(1+\varepsilon)$ approximation of $d(I_c,I_{c'})$ as follows.
Since $P(c)$ and $P(c')$ are disjoint, assume without loss of generality that $x_c \le y_c < x_{c'} \le y_{c'}$.
The algorithm
executes a \DRMQ{} data structure query on the rectangle in $\hat B$ defined by corners $(\ceil{\frac{x_c-1}{\tau}} +1, \ceil{\frac{x_{c'}-1}{\tau}} +1)$ and $(\floor{\frac{y_c}{\tau}}, \floor{\frac{y_{c'}}{\tau}})$.
Let $\hat{b}_{i,j}$ be the answer returned by the \DRMQ{} query.
When proving correctness we will show that $\hat{b}_{i,j}$ is a $(1+\varepsilon)$ approximation of $d(I_c,I_{c'})$.
Finally, the algorithm returns $\min\left( \hat{b}_{i,j},\alpha_c(c'),\alpha_{c'}(c) \right).$

\paragraph*{Time Complexity.}
For the \RNNS{} data structure we use the solution of~\cite{NN12} which on $n$ points (which is the size of $A$) has  preprocessing time $O(n\log n)$ and query time  $O(\log^{\varepsilon} n ) = \Tilde{O}(1)$.

By Remark~\ref{rmk:constant time queries} and the fact that $|I_i| = |P(\hat C_i)| = |P(\hat C_j)| = |I_j| = \tau = n^b$, each \ACDO{} query used for constructing $\hat B$ costs $O(1)$ time\footnote{There is subtle technical issue here, since $I_{\ceil{\frac{n}{\tau}}}$ may contain less than $\tau$ elements. To solve this, we add dummy points with color $\hat C_{\ceil{\frac{n}{\tau}}}$ at locations $[2n+1,\dots 2n+\tau-|I_{\ceil{\frac{n}{\tau}}}|]$.
Thus, $\abs{P(\hat C_{\ceil{\frac{n}{\tau}}})}=\tau$, and the  dummy points do not affect the color distances (since they are far enough from the non-dummy points).}.
So after preprocessing  the \ACDO{} data structure, constructing  $\hat B$ costs $O((\frac{n}{\tau})^2)$ time.
The preprocessing time of the \DRMQ{} data structure of~\cite{AFL07} when applied to $\hat B$ is $\tilde O((\frac{n}{\tau})^2)$.
Thus, since $T_{E^*}(n) = \Omega(\max(n^2/\tau^2),n))$, the total time cost of the preprocessing phase, which is composed of constructing the \ACDO{} and \RNNS{}  data structures on $A$, computing $\hat B$, and preprocessing a \DRMQ{} data structure, is
$$ \Tilde{O}\left(T_{E^*}(n) + (n/{\tau})^2 + n \right) = \Tilde{O}\left(T_{E^*}(n)\right) =
\begin{dcases}
      \Tilde{O}\left(\frac{n^{2}}{\varepsilon\tau^{\frac{2}{\omega -1 }}}\right) &\text{for } 1 \leq  \tau \leq n^\frac{\omega -1}{\omega + 1}\\
      \Tilde{O}\left(\frac{n^{\frac{\omega +1 }{2} }}{\varepsilon\tau^{\frac{\omega -1 }{2}}}\right) &\text{for } n^\frac{\omega -1}{\omega + 1} \leq \tau \leq n
    \end{dcases} $$

The query process consists of $O(\tau)$ \RNNS{} queries, and a single \DRMQ{} query.
Thus, the query time cost is $\Tilde{O}(\tau)$.
Finally, similar to the proof of Theorem~\ref{thm:ACDO}, we  obtained the following tradeoff curve for a fixed $\varepsilon$.
\[\begin{dcases}
        a +  \frac{2}{\omega - 1}b = 2 &\text{if } 0 \leq  b \leq \frac{\omega -1}{\omega + 1} \\
        \frac{2}{\omega - 1}a + b = {\frac{\omega+1}{\omega-1}}  &\text{if } \frac{\omega -1}{\omega + 1} \leq b \leq 1.
    \end{dcases}\]

\paragraph*{Correctness.}
If either $y_{c} - x_{c} < 2\tau$ or  $y_{c'} - x_{c'} < 2\tau$ then the algorithm returns $\delta_{c,c'}$ exactly.
Thus, for the rest of the correctness proof we focus on the case where $y_{c} - x_{c} \geq 2\tau$ and $y_{c'} - x_{c'} \geq 2\tau$.
Notice that for every two sets $S$ and $S'$, and a subset $\hat S \subset S$ we have $d(S,S') \leq d(\hat S,S')$.
Furthermore, since for every integer $\ceil{\frac{x_{c} - 1}{\tau}} +1 \leq t \leq \floor{\frac{y_{c}}{\tau}}$ it holds that $I_{t} \subseteq I_{c}$, we have $I_{i} \subset A[x_c,y_c] = P(c)$ and $I_{j} \subset A[x_{c'},y_{c'}] = P({c'})$.
In addition, by Theorem~\ref{thm:ACDO} $\hat{b}_{i,j} \geq \delta_{\hat C_i, \hat C_j} = d(I_i,I_j)$.
Therefore,
$$
\hat{b}_{i,j} \geq d(I_i,I_j)
\geq d(A[x_{c},y_{c}],I_j)
\geq d(A[x_{c},y_{c}],A[x_{c'},y_{c'}])
=\delta_{c,c'}.$$

Furthermore, since  $\alpha_{c}(c') =d(J_c\cup K_c,A[x_{c'},y_{c'}])$ and $J_c \cup K_c \subset A[x_c,y_c]$, we have $\alpha_{c}(c') \geq \delta_{c,c'}$.
Similarly, $\alpha_{c'}(c) \geq \delta_{c,c'}$.
Thus, we conclude that $\delta_{c,c'} \leq \min\left( \hat{b}_{i,j},\alpha_c(c'),\alpha_{c'}(c) \right)$.

Next, notice that there exist $p \in A[x_c,y_c]$ and $p' \in A[x_{c'},y_{c'}]$ such that $\delta_{c,c'} = d(p,c')= d(c,p') = d(p,p')$.
Therefore, if $p\in J_{c}\cup K_{c}$, then $\alpha_c(c') = d(J_c\cup K_c,A[x_{c'},y_{c'}]) = d(p,p') = \delta_{c,c'}$, and so
$$    \delta_{c,c'} \leq \min\left( \hat{b}_{i,j},\alpha_c(c'),\alpha_{c'}(c) \right) \leq \alpha_c(c')
= \delta_{c,c'}.
$$
Thus, all of the inequalities become
equalities, and so, in this case, the algorithm returns $\delta_{c,c'}$.
Similarly, if $p'\in J_{c'}\cup K_{c'}$ then the algorithm returns $ \alpha_{c'}(c) = \delta_{c,c'}$.

For the remaining cases, assume that $p\not\in J_c \cup K_c $ and $p' \not\in J_{c'} \cup K_{c'}$.
Therefore, it must be that $p\in I_c$ and $p'\in I_{c'}$, and so there exist integers  $\ceil{\frac{x_{c}-1}{\tau}} +1 \leq i^* \leq \floor{\frac{y_{c}}{\tau}}, \ceil{\frac{x_{c'}-1}{\tau}} +1 \leq j^* \leq \floor{\frac{y_{c'}}{\tau}}$
such that $p \in I_{i^*}$ and $p' \in I_{j^*}$.
Thus, $\delta_{c,c'} = d(p,p') = d(I_{i^*},I_{j^*})$, and so
\begin{align*}
  \delta_{c,c'} &\leq \min\left( \hat{b}_{i,j},\alpha_c(c'),\alpha_{c'}(c) \right)
  \leq \hat{b}_{i,j}\\
    &= \min_{\ceil{\frac{x_{c}-1}{\tau}} +1 \leq \hat{i} \leq \floor{\frac{y_{c}}{\tau}} , \ceil{\frac{x_{c'}-1}{\tau}} + 1 \leq \hat{j} \leq \floor{\frac{y_{c'}}{\tau}}} \{\hat{b}_{\hat{i},\hat{j}}\}\\
    &\le  \min_{\ceil{\frac{x_{c} -1 }{\tau}} +1 \leq \hat{i} \leq \floor{\frac{y_{c}}{\tau}} , \ceil{\frac{x_{c'} -1}{\tau}} + 1 \leq \hat{j} \leq \floor{\frac{y_{c'}}{\tau}}} \{(1+\varepsilon) d(I_{\hat{i}}, I_{\hat{j}
    })\}\\
  &\leq (1+\varepsilon)d(I_{i^*},I_{j^*})
  = (1+\varepsilon)\delta_{c,c'}.
\end{align*}
To conclude, we have shown that in all cases the algorithm returns an answer that is either exactly $\delta_{c,c'}$ or a $(1+\varepsilon)$ approximation of $\delta_{c,c'}$, which completes the proof of Theorem~\ref{thm: HACDO by ACDO}.

\section{Returning the Points that Define the Distance}\label{app:return points}
 While the statements of Theorems~\ref{thm:ACDO} and~\ref{thm: HACDO by ACDO} state that only the distance is returned, it is rather straightforward to adjust our algorithms to return the actual points that define the (approximate) distance.
 Notice that this feature is needed for the snippets algorithm.
 In this section, we highlight the modifications needed for returning the points themselves.

\paragraph*{Augmenting Theorem~\ref{thm:ACDO}.}
To begin, when answering a query  in the generic algorithm (Section~\ref{sec:overview}), for the case of a query on at least one light set, the algorithm considers all points in one of the light sets, and so assuming that the \NNS{} data structure returns a point, it is straightforward to  return the two points that define the distance.
The more involved case is when both sets are heavy, and in this case we want the matrix $E^*$ to not only store the approximate distance between every pair of heavy colors, but to also store two points, one for each color, that define the returned approximate distance.

Thus, we consider the construction algorithm for $E^*$, described in Section~\ref{sec:construct E*}.
We first augment each $1$ entry in $A$ and each $1$ entry in each $B^{(\ell)}$ with a point of the appropriate color that is in the appropriate block of $S$.
Next, We employ the algorithm of Alon and Naor~\cite{AN96} on the computation of each of the $E^{(\ell)}$ matrices to support storing a witness\footnote{A witness in a Boolean matrix product $C=AB$ for $c_{i,j}=1$ is an index $k$ such that $a_{i,k}=b_{k,j}=1$.} for each $1$ entry.
Thus, whenever the algorithm for constructing $E^*$ (Algorithm 1) reaches Line~\ref{line:assign eij}, the algorithm also retrieves and stores with entry $i,j$ the points defined by the single witness to $e_{i,j}^{(\ell +1)}$.

\paragraph*{Augmenting Theorem~\ref{thm: HACDO by ACDO}.}
The algorithm for \AMCDOCH{} answers queries by either using \RNNS{} queries (which we can assume return the actual point), the \ACDO{} solution of Theorem~\ref{thm:ACDO}, which we discussed above, and a \DRMQ{} query on a matrix $\hat B$ which is an output matrix for $\ACDO{}$ on a specially constructed input (see~\Cref{sec: HACDO by ACDO}).
Adjusting the \DRMQ{} data structure to also store the points is straightforward by treating each entry in $\hat B$ as a key-value pair (the keys used to construct the \DRMQ{} data structure and the value is the two points).

\section{Proof of Theorem~\ref{thm:minplus LB}: CLB for Exact \MCDO{} on an Array}\label{sec:minplus LB}

To prove Theorem~\ref{thm:minplus LB}, we design a reduction from \MPMP{} on values in $[\hat n]$ to \MCDO{}.
To do so, consider an \emph{unbalanced} version of \MPMP{} where $A$ is of size $\hat n \times \hat m$, $B$ is of size $\hat m \times \hat n$, and all of the values are bounded by some positive integer $M$.
It is straightforward to show that for any $x\ge 0$ such that $\hat m = \hat n ^x$, solving unbalanced \MPMP{} with any polynomial time improvement over $(\hat n^2\hat m )^{1-o(1)}$ time is equivalent to solving \MPMP{} on balanced matrices with $\hat m = \hat n$ in $\hat n^{3-\Omega(1)}$ time.
Thus, an algorithm for unbalanced \MPMP{} with values bounded by $M = [\min(\hat n, \hat m)] = \hat{n}$ in time less than $(\hat n^2\hat m )^{1-o(1)}$ would refute the Strong-\APSP{} hypothesis.

\paragraph*{The reduction.}
Our goal is to design a reduction
from the unbalanced \MPMP{} problem with values bounded by $\hat n \ge \min(\hat n, \hat m)$ to \MCDO{}.
In preparation for the proof of Theorem~\ref{thm:CDO LB rand}, we first describe the reduction using positive integer parameter $M$ as the upper bound on the values in the input matrices, and during the analysis we set $M=\hat n$.

For each $a_{i,j}$, the algorithm defines a point $a'_{i,j} = (M - a_{i,j}) + 9Mj$ with color $i$.
For each $b_{i,j}$, the algorithm defines a point $b'_{i,j} = b_{i,j} + 3M +9Mi$ with color $\hat n + j$.
Thus, the largest point defined is at most $O(M \cdot \hat m)$, and there are $2\hat n$ colors.
Notice that there cannot be two entries in the same row (column) of $A$ ($B$) that define the same point.
However, it is possible that $a_{i,j} = a_{i',j}$ for $i\ne i'$, and so $a'_{i,j} = a'_{i',j}$, but the two points have different colors.
A similar phenomenon can happen to points defined by entries of $B$ which share the same row.
Thus, we merge all occurrences of the same point into a single occurrence but colored with all of the colors of the pre-merge occurrences.
The set of points, denoted by $S$, contains at most $2\hat n \hat m$  multi-colored points in a metric defined by an array of size $O(M\hat m)$, and so the algorithm uses the \MCDO{} algorithm to preprocess $S$.
Finally, for each entry $d_{i,j}$ in $D$, the algorithm sets $d_{i,j}\leftarrow \delta_{i,\hat n +j} - 2M$ by performing a query on the \MCDO{} data structure.

\paragraph*{Correctness.}
Suppose $d_{i,j} = a_{i,k^*} + b_{k^*,j}$ for some $k^* \in [ \hat m]$.
For every $k\in [ \hat m]$, by definition we have $b'_{k,j} > a'_{i,k}$ and so
$\abs{a'_{i,k} - b'_{k,j}} = b'_{k,j} - a'_{i,k}
    = b_{k,j} + a_{i,k} + 2M$.
Since $a'_{i,k^*}$ is colored with color $i$ and $b'_{k^*,j}$ is colored with color $\hat n +j$, we have
\begin{equation}\label{eqn:1}
    \delta_{i,\hat n +j}-2M \le \abs{a'_{i,k^*} - b'_{k^*,j}}-2M =  a_{i,k^*} + b_{k^*,j} = d_{i,j}.
\end{equation}

Let $k',k''\in [\hat m]$.
If $k'\ne k''$, then
\begin{align*}
    \abs{a'_{i,k'} - b'_{k'',j}}
    & = \max \big(a'_{i,k'} - b'_{k'',j}, b'_{k'',j} - a'_{i,k'}\big)\\
    &=\max\big( 9M(k'-k'') -2M - a_{i,k'} - b_{k'',j}, b_{k'',j} + a_{i,k'} + 9M(k''-k') + 2M\big) \\
    &\geq 5M.
   \end{align*}
However, if $k'=k''$ then $a_{i,k'} +b_{k',j} +  2M \leq 4M$.
Since $4M < 5M$, we have $$    \delta_{i,\hat{n} +j} = \min_{p \in P(i),p' \in P(\hat{n} +j)}\left\{\abs{p-p'}\right\} = \min_{1\leq \hat{k} \leq \hat{n}}\left\{\abs{a'_{i,\hat{k}} - b'_{\hat{k},j}}\right\}.$$
Thus, there exists some $\hat k\in [\hat m]$ such that $\delta_{i,\hat{n} +j} = \abs{a'_{i,\hat k} - b'_{\hat k,j}} = a_{i,\hat k} + b_{\hat k,j} +2M \ge d_{i,j} +2M$,
and combined with Equation~\ref{eqn:1} we have $d_{i,j} = \delta_{i,\hat n +j}-2M$.

\paragraph*{Lower bound.}
Recall that in our setting $M=\hat n$.
The reduction preprocesses an \MCDO{} data structure on $n\le 2 \hat n\hat m$ points in an array metric of size $O(M\hat m) = O(n)$, and then answers $\hat n^2$ \MCDO{} queries.
If $t_p(n)$ and $t_q(n)$ are  preprocessing and query times, respectively, of the \MCDO{} data structure, then the Strong-\APSP{} conjecture implies that $t_p(\hat n \hat m)+\hat n^2 \cdot t_q(\hat n \hat m) = (\hat n^{2} \hat m)^{1-o(1)}.$

Recall that our goal is to prove that $a+b \ge 2 - o(1)$. Thus, assume towards a contradiction that $a+b = 2 - \Omega(1)$, which, by straightforward algebraic manipulation, implies that $\frac{2a}{a-b} = 1 + \frac{2}{a-b} - \Omega(1).$
Moreover, since the unbalanced \MPMP{} is hard for any choice of $x\ge 0$, we can choose $x = \frac{2}{a-b}-1$ and so $\hat n \hat m = \hat n^{1+x} = \hat n^{\frac{2}{a-b}}$.
Notice that $2+\frac{2b}{a-b} =\frac{2a}{a-b}$
Thus, we have $   t_p(\hat n \hat m)+\hat n^2 \cdot t_q(\hat n \hat m) = O( \hat n^\frac{2a}{a-b} + \hat n ^{2+\frac{2b}{a-b}})
    = O(\hat n^{\frac{2a}{a-b}})= O(\hat n^{1 + \frac{2}{a-b} - \Omega(1)})< (\hat n^2 \hat m)^{1-o(1)}$,
which contradicts the Strong-\APSP{} hypothesis.

\section{Proof of Theorem~\ref{thm:CDO LB rand}:  CLB for Exact \CDO{} on an Array}\label{sec:CDO lb rand}

We first describe a basic reduction algorithm, which is then repeated several times.
The basic algorithm begins by picking uniformly random \emph{offsets} $r_1, ..., r_n$, $s_1, ..., s_n \in [\hat n]$, and then constructs matrices $\hat A$  and $\hat B$ of sizes $\hat n \times \hat m$ and $\hat m \times \hat n$, respectively,  where $\hat a_{i,j} = a_{i,j} + r_i$ and $\hat b_{i,j} = b_{i,j} + s_j$.
Let $\hat D$ be the min-plus product of $\hat A$ and $\hat B$.
It is straightforward from the min-plus definition to verify that $d_{i,j} = \hat d_{i,j} - r_i-s_j$.

The basic algorithm does not compute $\hat D$. Instead,  the algorithm applies on $\hat A$ and $\hat B$ the reduction given in the proof of Theorem~\ref{thm:minplus LB} (see Section~\ref{sec:minplus LB}) to compute a matrix $\tilde D$ with the following two adjustments:
\begin{enumerate}
    \item If the same point appears more than once (which must be with different colors), then remove all copies of the point, regardless of the colors.

\item The computation of each entry $\tilde d_{i,j}$ in $\tilde D$ takes place only if both colors $i$ and $\hat n + j$ have points in $S$. Otherwise we set $\tilde d_{i,j}\leftarrow \infty$.
\end{enumerate}
The first adjustment guarantees that the resulting points are colored points (as opposed to multi-colored points), and so instead of using a \MCDO{}  as in the proof of Theorem~\ref{thm:minplus LB}, the basic algorithm makes use of a \CDO{} for computing each $\tilde d_{i,j}$.

\paragraph*{Repetitions.}
The algorithm repeats the basic algorithm $\alpha$ times, for some parameter $\alpha$ to be set later.
For the $\ell$th iteration, denote the random offsets by
$r^{(\ell)}_1, ..., r^{(\ell)}_n$, $s^{(\ell)}_1, ..., s^{(\ell)}_n$.

Let $\tilde D^{(\ell)}$ be the output of the $\ell$th repetition, and suppose that the number of repetitions is $\alpha$.
Finally, the algorithm computes each $d_{i,j}$ by setting $d_{i,j}\leftarrow \min_{1\le \ell \le \alpha}\{\tilde d^{(\ell)}_{i,j}- r^{(\ell)}_i-s^{(\ell)}_j\} $.

\paragraph*{Correctness.}

First, consider a single execution of the basic algorithm. Notice that due to the adjustments that  the basic algorithm utilizes when applying the reduction of Theorem~\ref{thm:minplus LB}, $\tilde D$ is not necessarily the same as $\hat D$, since some points may be missing from the point set defined by the reduction.
Nevertheless,  from the discussion in the proof of Theorem~\ref{thm:minplus LB} and the property that $d_{i,j} = \hat d_{i,j} - r_i-s_j$, if $ d_{i,j} = a_{i,k^*} +  b_{k^*,j}$ for some $k^*\in [\hat m]$, and the points for both $\hat a_{i,k^*} $ and  $\hat b_{k^*,j}$ were not removed by the adjustment, then $\tilde d_{i,j} - r_i-s_j = \delta_{i, \hat n +j} -2M - r_i-s_j= \hat a_{i,k^*} + \hat b_{k^*,j} - r_i-s_j= \hat d_{i,j}- r_i-s_j = d_{i,j}$, and otherwise, $\tilde d_{i,j}- r_i-s_j \ge d_{i,j}$.

Let $R_{i,j}$ be the event that  the points for both $\hat a_{i,k^*} $ and  $\hat b_{k^*,j}$ were not removed by the adjustment.
In order for the point defined by $\hat a_{i,k^*} $ to be removed by the adjustment, there must be some $i'\neq i$ such that $\hat a_{i,k^*} = \hat a_{i',k^*} $.
In order for such an event to take place, it must be that $r_i-r_{i'} = a_{i',k^*} - a_{i,k^*}$, which for a fixed $k'$ happens with probability $\frac 1 {\hat n}$.
Thus, since all of the $r_1,\ldots, r_{\hat n}$ are independent, the probability that
the point defined by $\hat a_{i,k^*} $ was not removed, which is the same as the probability
that there is no  $i'$ with $\hat a_{i,k^*} = \hat a_{i',k^*}$, is $(1-\frac1 {\hat n})^{\hat n} \approx \frac 1 e$.
Similarly, the probability that the point defined by $\hat b_{k^*,j} $ is not removed is $(1-\frac1 {\hat n})^{\hat n} \approx \frac 1 e$.
Finally, since the events of removing the points defined by
$\hat a_{i,k^*}$ and $\hat b_{k^*,j} $ are independent, we have that $\Pr[R_{i,j}] \approx \frac 1 {e^2}$.

To guarantee that for all $i,j\in [\hat n]$, event $R_{i,j}$  happens in at least one of the repetitions, by applying coupon collector arguments, for high probability guarantees it suffices to set $\alpha = \Theta(\log (\hat n \hat m))$.
It is also possible to continue repeating the basic algorithm until each event $R_{i,j}$ happens at least once, which happens after $\alpha = \Theta(\log (\hat n \hat m))$ repetitions in expectation.

Since for each $i,j \in [\hat n]$, for each $ \ell \in [\alpha]$ we have $\tilde d^{(\ell)}_{i,j} - r^{(\ell)}_i-s^{(\ell)}_j \ge  d_{i,j}$, and for at least one $\ell^* \in [\alpha]$ we have $\tilde d^{(\ell^*)}_{i,j} - r^{(\ell^*)}_i-s^{(\ell^*)}_j =  d_{i,j}$, then  $\min_{1\le \ell \le \alpha}\{\tilde d^{(\ell)}_{i,j} - r^{(\ell)}_i-s^{(\ell)}_j \} = \tilde d^{(\ell^*)}_{i,j} - r^{(\ell^*)}_i-s^{(\ell^*)}_j =  d_{i,j} $.

\paragraph*{Lower bound.}
The process of deriving the lower bound tradeoff is the same as in the proof of Theorem~\ref{thm:minplus LB} (see Section~\ref{sec:minplus LB}) with the following two adjustments: (1) in our case $M = 2\hat n$, and (2) we perform $\Theta(\log n)$ repetitions.
Both of these adjustments do not change the exponents in the polynomials of the running times, and so they do not affect the lower bound tradeoff curve.

\section{Conclusions and Open Problems}
We have shown the existence of  FMM based algorithms for both \ACDO{} and the $(1+\varepsilon)$-approximate snippets problem which, assuming $\omega =2$, are essentially optimal.
Moreover, we proved CLBs for exact version of \CDO{}, implying that the exact versions of \CDO{} and the snippets problem are strictly harder than their approximate versions.

We remark that one immediate and straightforward way to improve our algorithms if $\omega >2$ is to apply fast rectangular matrix multiplication (\cite{ADWVXZ25}).
However, we chose not to describe such improvements since they are mostly technical and do not add any additional insight to the problems that we address.
Moreover, we remark that it is straightforward to  adapt our \CDO{} algorithms to  return the two points (one of each color) that define the distance in addition to the actual distance.
Our work leaves open the task of designing improved algorithms for more general metrics such as higher dimensional Euclidean space.

\paragraph*{Acknowledgment}
The authors thank an anonymous reviewer who suggested some ideas for tightening the lower bounds proved in Theorems~\ref{thm:minplus LB} and~\ref{thm:CDO LB rand}.
\bibliographystyle{plainurl}
\bibliography{bib}

\end{document}